\documentclass{article}
\usepackage{mathptmx}
\usepackage[mathscr]{euscript}
\usepackage{amsmath,amsthm,amssymb}
\usepackage{natbib}
\usepackage{ifsym}
\usepackage{graphicx} 
\usepackage{geometry}[width=0.2in]
\usepackage{authblk}
\usepackage{color}
\usepackage{xcolor}
\usepackage{comment}
\usepackage{hyperref}
\usepackage{url}
\usepackage{dirtytalk}
\usepackage{setspace}
\newtheorem{definition}{Definition}
\newtheorem{theorem}{Theorem}

\begin{document}

\title{Cooperative Game Theory Model for Sustainable UN Financing: Addressing Global Public Goods Provision}
\author[1,*]{Labib Shami}
\author[2,3]{Teddy Lazebnik}

\affil[1]{Department of Economics, Western Galilee College, Acre, Israel}
\affil[2]{Department of Information Systems, University of Haifa, Haifa, Israel}
\affil[3]{Department of Computing, Jonkoping University, Jonkoping, Sweden}
\affil[*]{Corresponding author: labibs@wgalil.ac.il}
\date{}
\maketitle

\begin{abstract}
This study introduces a novel cooperative game theory model designed to improve the United Nations’ current funding mechanisms, which predominantly rely on voluntary contributions. By shifting from a Nash equilibrium framework, where member states act in self-interest, to a cooperative model, the proposed approach aligns each country's financial contributions with the benefits they derive from UN activities. The model ensures a more sustainable and equitable system by introducing personalized pricing based on derived utility. Using agent-based simulations, the research demonstrates that the suggested approach increases global utility, reduces free-rider issues, and creates a more efficient resource allocation system. The findings suggest that the proposed model can optimize UN funding, ensuring a more stable and effective framework for global public goods provision, while considering the varying economic capacities of member states. Further research is recommended to assess the political viability of the model.\\
\textbf{Keywords:} Cooperative Games; Noncooperative Games; Public Goods; Nash Equilibrium; Lindahl Equilibrium; Trading Equlibrium. \\
\textbf{JEL Classification:} C71; C72; H41.
\end{abstract}

\section*{Introduction}
\label{sec:intro}
The financing mechanisms of international organizations are crucial for their effectiveness and operational viability. The United Nations (UN), as a leading global governance entity and multilateral organization, plays a vital role in addressing international challenges, from peace and security to humanitarian crises and development needs. Optimizing the UN's funding structure and ensuring its sustainability are crucial for enhancing its capacity to effectively address global challenges.

The UN relies on both the political will and funding of member states to implement its mandates and fulfill multilateral functions. Its financing mechanisms have evolved in an attempt to balance financial responsibilities among member states and adapt to changing global dynamics. The United Nations' funding landscape is characterized by a diverse array of sources, reflecting the organization's global reach and multifaceted responsibilities. These funding mechanisms, which have evolved over time, include mandated assessments, voluntary contributions, and trust funds. This multifaceted funding structure has profoundly influenced the UN's operational capabilities and ability to respond to complex international issues while navigating the intricacies of resource allocation and strategic prioritization.

Financial contributions are indispensable to the UN's core operations in peacekeeping, humanitarian aid, and development. Thus, inadequate funding can compromise aid deployment efficiency, potentially exacerbating human suffering in crises. Sustained funding is crucial for achieving long-term development goals such as poverty reduction, healthcare improvements, and educational advancements, while unpredictable funding constrains the UN's ability to plan and execute strategies effectively. As global issues become more complex, financial contributions from member states have grown increasingly precarious, raising concerns about funding sustainability and predictability. This necessitates a reassessment of funding strategies to enhance organizational resilience and effectiveness.

The inconsistent and sometimes inadequate financial contributions from member states create significant challenges across various operational areas. One of the critical issues surrounding the UN's funding is the increasing trend toward earmarked contributions, which can limit the flexibility of resource allocation. As noted by \cite{campos2018political}, the decentralized funding model allows donor countries to exert influence over specific programs, potentially skewing priorities away from broader UN objectives \citep{campos2018political}. This situation raises concerns about the sustainability of UN operations, as the reliance on voluntary contributions can lead to funding shortfalls for essential programs, particularly in humanitarian and development sectors \citep{browne2017vertical}. The earmarking of funds, while beneficial for specific projects, can undermine the principles of multilateralism that the UN is built upon, as highlighted by \cite{browne2017vertical}.

Moreover, the UN's funding challenges are exacerbated by geopolitical dynamics and the shifting priorities of member states. The reduction of contributions from major donors, particularly during the Trump administration, has sparked debates about the UN's financial viability and its capacity to maintain peacekeeping missions and humanitarian efforts \citep{abdenur2019peacekeeping}. The implications of such funding cuts are profound, as they not only affect immediate operational capabilities but also the long-term strategic goals of the organization. \cite{coleman2017extending} points out that reforming financing for UN peacekeeping is particularly challenging due to the politicized nature of these contributions. These financial uncertainties extend beyond peacekeeping, affecting crucial sectors such as humanitarian aid and development. The UN frequently depends on voluntary contributions, which can vary with changing political climates and donor priorities. This results in a fragmented financial environment, often leading to sporadic and insufficient funding windows that threaten both current and future programs.

Despite proposals for reforms aimed at stabilizing and increasing the predictability of funding, entrenched political and structural obstacles continue to impede progress. Tackling these challenges is essential for the UN to effectively fulfill its global mandates and achieve its long-term goals in peacekeeping, humanitarian aid, and development. To enhance its capacity to provide global public goods, the UN has considered negotiated pledges or replenishments as a funding mechanism that merges core funding features—particularly the delegation of autonomy—with the interests of member states in specific areas \citep{baumann2024financing, jenks2013united}. However, to date, negotiated pledges have only been implemented in the International Fund for Agricultural Development (IFAD) and the UN climate funds. 

This research introduces a novel funding approach to ensure sustainable UN operations financing, mitigating budgetary uncertainties. The proposed mechanism aims to optimize benefits for individual member states (individual utility) while maximizing the global impact of UN initiatives (global utility). It creates a stable financial ecosystem where member states are incentivized to maintain their commitments, as deviations would negatively impact individual and collective benefits. This is achieved by implementing a cooperative game theory model based on tailored pricing structures for each member state. This approach seeks to establish a robust, self-sustaining financial framework that aligns member states' interests with the UN's global objectives, fostering long-term stability and effectiveness in UN operations.

The provision of public goods such as healthcare, public education, or global public goods, is critical for governments and citizens alike \citep{wyrobek2022public, deneulin2007public, intro_1}. Theoretically, agents in an economy should prefer to allocate some, if not most, of their resources to public goods organized by an all-known agent \citep{intro_2}. In many cases, these theoretical predictions are not realized as they assume all agents in the economy are symmetric \citep{intro_5}. Indeed, in reality, agents express different types of asymmetry such as unique utility functions \citep{intro_3} or even the available information about the economy's state \citep{intro_4}. Recent studies focusing on the computational aspect of the economy even show different computational capabilities between agents \citep{intro_6,intro_7,intro_8}.

In a narrower context, the COVID-19 pandemic highlighted the critical need for global cooperation in addressing public health crises \citep{pandemic_important,who_data}. Given the transnational nature of infectious diseases, isolated national responses are insufficient \citep{close_1,close_2}. Indeed, \cite{close_bad} proposes that cooperative financing for public health goods, particularly in pandemic preparedness and response, is essential to mitigate the impact of such global challenges. The cooperative financing model involves determining individual contributions to public goods based on each participant's ability to pay and the benefits they receive. This approach mirrors personalized Lindahl pricing, ensuring a fair and efficient distribution of costs. To this end, pandemics require a global cooperative approach due to varying national health capacities \citep{better_together}. A cooperative financing model can equalize resource distribution for pandemic preparedness activities like surveillance, research, and healthcare infrastructure development. While such cooperation is specific and theoretically should include all countries globally, more sustainable cooperation with politically intervened countries, such as United Nations (UN) member states, is of great importance.

This paper depicts two well-known models for public goods allocation - the Nash and Lindahl equilibria. Although the Nash equilibrium outcome is conceptually appealing, it is not Pareto efficient \footnote{An allocation of resources is Pareto efficient if it is not possible to make anyone better off without making someone else worse off. However, if resources are allocated in an economically efficient manner, it does not imply equality or fairness.}. The possibility of social failure in the form of free riders encouraged many scholars to look for methods to overcome the inefficiency of the Nash equilibrium allocation \citep{shitovitz2001stable,peleg1986proof,coase1960problem}. Such a line of thought was developed by applying the Lindahl equilibrium conditions to the Nash allocation in an attempt to exploit the efficiency property inherent in the Lindahl cooperative equilibrium \citep{dijkstra2020pareto,chen2018collective,perets2012trading}. However, each of these studies presented limitations (for example, limiting the analysis to a single public good, symmetry among the economic agents, and a certain type of utility function) that prevented reaching general results regarding the existence and uniqueness of the allocation - both for confirmation or refutation of these properties. Recall that although the Lindahl equilibrium allocation in a public goods economy is efficient, it does not necessarily Pareto dominate the Nash equilibrium bundle \citep{shitovitz1998cournot}. This conclusion, which has been affirmed in many studies \citep{perets2012trading, shitovitz1998cournot, coase1960problem}, prevents individuals from preferring the Lindahl bundle over their Nash allocation.

To develop the model proposed in the study, we adopt \citeauthor{perets2012trading}'s (2012) model - the "\textit{Trading Equilibrium}" - which is based on a personalized price mechanism that yields a core allocation that strictly Pareto dominates the unique Nash allocation. The importance of the "Trading Equilibrium" is that it provides an efficient cooperative solution that Pareto dominates the non-cooperative Nash equilibrium. We develop a prescriptive model that takes the Nash Equilibrium as the starting point, under the assumption that the current financing model of the UN's activities (and in particular the amounts each country chooses to contribute beyond the amount it is obliged to transfer in practice) is obtained in accordance with the Nash equilibrium. From there, we consider a Pareto-improving path that relies on the principles of the Lindahl Equilibrium. However, unlike \citeauthor{perets2012trading}'s (2012) model with one private and one public goods, we consider a model with a finite number of consumption (private) and public goods. 

A generalization of the model to account for multiple private and public goods is essential when modeling the environment in which the United Nations operates, as it provides a range of global public goods. This phase of the research is critical, as it involves applying a model that reflects the economic framework governing member states' contributions to the financing of United Nations activities in the provision of global public goods. The incorporation of \citeauthor{perets2012trading}'s (2012) model, originally developed and validated for a single private and public good, is a pivotal step. The initial stage of this study will, therefore, establish the existence and uniqueness of the Trading equilibrium within an extended version of the model, which includes a finite number of both private and public goods. This extension is necessary to align the model with the research objectives, enabling its application to more complex scenarios involving the provision of global public goods by the various UN entities.

Based on the newly proposed model, we prove the existence of Nash and Trading equilibria allocations and show that their unity is guaranteed. Moreover, we demonstrate achieving the predicted Trading equilibrium using computer simulation and historical data of the sum of the contributions in each fund channel from each member country to finance UN activities. In particular, we show that the proposed model provides a computational framework to investigate and simulate international cooperation with countries that have asymmetric objectives and resources, making the proposed model highly realistic and implementable.

The rest of the paper is organized as follows. Section \ref{sec:related_work} provides an overview of the United Nation's funding model and the public goods economy theory. Section \ref{sec:model} provides a formal definition of the model as well as the proofs for the existence and uniqueness of the Nash and Trading equilibria allocations. Section \ref{sec:simulation} outlines a computer simulation that implements the proposed model for the UN funding. Finally, section \ref{sec:discussion} evaluates the framework's potential impact and concludes with recommendations for policy and future research directions.

\section{Related work}
\label{sec:related_work}
In this section, we provide the theoretical and applied background used in the proposed theoretical analysis and the computer simulation. Initially, we briefly review the structure of the UN's past and present funding models. Next, we present the development of multi-agent economics equilibria evolution from the Nash equilibria to the Trading equilibria.

\subsection{The United Nations' funding models and their development}

The United Nations' financial framework has undergone considerable transformation since its inception in 1945. Initially, the funding model combined voluntary contributions, mandatory assessments, and the establishment of trust funds, each of which played a vital role in supporting the organization’s diverse functions. This diversified funding approach sought to equitably distribute financial obligations among member states, thereby enabling the UN to function efficiently while maintaining broad political legitimacy.

The evolution of the UN’s early financing mechanisms underscores a continuous pursuit of financial stability and effectiveness. Assessed contributions were designed to provide a consistent and fair flow of resources, while voluntary contributions and trust funds offered the flexibility and targeted support necessary to address specific global issues. Analyzing these early funding structures reveals the inherent complexities and trade-offs in international financing, and underscores the ongoing necessity for innovative strategies to ensure the UN’s sustained operation in an evolving global landscape.

In its formative years, the UN primarily depended on assessed contributions, a mandatory financing model based on the "Scale of Assessments," which was calculated through formulas agreed upon by all member states. This scale allocated financial responsibilities according to member states' capacity to pay, accounting for factors such as per capita income and debt burdens \citep{haug2022international}. The Scale of Assessments reflects the principle of "differentiated universality," which seeks to balance global inclusivity and fairness by considering the unique economic circumstances of individual countries \citep{haug2022international}. This system was designed to ensure a stable and predictable flow of financial resources, representing a collective commitment to international cooperation and shared responsibility.

The United Nations’ reliance on assessed contributions proved inadequate in addressing the growing financial demands associated with its expanding array of activities, including peacekeeping, humanitarian assistance, and development programs. This financial shortfall led to the introduction of voluntary contributions, notably through initiatives such as the Expanded Program of Technical Assistance (EPTA). Unlike the regular budget, which is financed through mandatory assessments, programs under EPTA were supported by voluntary contributions from member states \citep{10.1093/oso/9780198877936.003.0004}. This dual funding model—combining mandatory and voluntary sources—enabled the UN to adjust to shifting political and financial contexts, allowing for more flexible responses to global challenges.

Nevertheless, voluntary contributions, while providing flexibility, also introduced complexities and dependencies. These contributions are discretionary, with member states determining both the amount and the purpose of their funding. A significant subset of voluntary contributions is earmarked funding, wherein donors designate how their funds are to be utilized. Wealthier nations, in particular, have contributed disproportionately, gradually shifting the financial burden onto a smaller group of countries \citep{10.1093/oso/9780198877936.003.0004}. While this ensured the continuation of critical programs, it increased the UN's susceptibility to financial instability and political influence from its largest donors.

The creation of trust funds marked another significant evolution in the UN's financial mechanisms. Trust funds, often established for specific purposes such as emergency relief or development projects, allowed for targeted funding and greater donor involvement. They reflected an adaptation to the increasingly complex global landscape and donor preferences. The rise of trust funds presents an opportunity for the UN to diversify its funding sources and reduce dependency on traditional donor contributions \citep{reinsberg2015rise}. 

Although the UN’s founders envisioned an organization primarily funded through assessed contributions, today the majority of its financial resources come from voluntary funding, which is provided at the discretion of donors who determine the amount, timing, and purpose of their contributions. Between 2011 and 2022, while the absolute amount of assessed contributions to the UN remained stable, their proportion within the overall funding structure has significantly decreased. In contrast, voluntary contributions, particularly earmarked funding, have seen notable growth in both absolute and relative terms, with other sources playing a relatively minor role. By 2022, assessed contributions accounted for only 18\% of the UN's funding, while voluntary contributions made up 82\% \citep{baumann2024financing}. This dramatic shift in the UN's financing model has had significant consequences for multilateralism. While voluntary funding has allowed the UN to expand its financial resources and activities beyond what would have been possible through assessed contributions alone, it has simultaneously diminished the influence of intergovernmental bodies governed by multilateral processes and altered the accountability frameworks of many UN programs and agencies \citep{graham2015money}.

The discretionary nature of voluntary funding undermines inclusive multilateral decision-making, as individual donors can dictate priorities by deciding when and how much to contribute, thereby influencing the UN's capacity to fulfill its mandates. Financing is not merely a matter of volume, but also of the mechanisms through which funds are transferred, as these have important political and organizational ramifications. Critics have raised concerns about whether the current financing model is appropriate, given the increasing focus on the UN's normative roles and the provision of global public goods rather than project-based activities. Some argue that the most suitable method to fund the UN's normative and global work is through assessed contributions, or alternatively through voluntary core contributions. However, the growing trend of financing global functions through selective voluntary contributions from a small group of donors, who often pick and choose which norms to support, reflects an ongoing challenge in the UN's financial structure \citep{jenks2013united}.

In contrast to the current funding structure of the United Nations, which relies heavily on voluntary contributions, the proposed model is based on obligatory contributions from each member state, determined according to the specific benefits they derive from UN activities. This approach not only ensures greater financial stability but also promotes voluntary participation by aligning contributions with the perceived value of UN services. Each UN agency's activities are framed as global public goods, with the value each nation assigns to these activities reflected in the portion of its budget allocated to the services provided. The specific contribution, or "personal price," that each member state is obligated to provide is derived from a cooperative public goods model, which will be elaborated upon in the subsequent section. This model aims to establish a more equitable and efficient financial framework for the provision of global public goods.

\subsection{The provision of public goods}
In an economic framework characterized by private ownership of all commodities and wherein economic agents are uniformly characterized as "small," the first welfare theorem posits that the competitive equilibrium allocation achieves Pareto efficiency, subject to the condition of local non-satiation \citep{arrow1951extension}. Nevertheless, in economies featuring public goods, the efficiency of a competitive equilibrium is not assured. Conventional neoclassical economic theory anticipates a suboptimal provision of public goods, attributing this to individuals engaging in free-riding behavior by leveraging the contributions of others. However, a voluntary public good contribution is observed in both the real world and laboratories \citep{mcginty2013public}.

Based on Paul Samuelson’s work, a public good is non-excludable and non-rivalrous, where no one can be excluded from its use and where the use by one does not diminish the availability of the good to others \citep{samuelson1954pure}. This concept has been debated and modified over time. In The Affluent Society, John Kenneth Galbraith suggests that public goods are “things [that] do not lend themselves to production, purchase, and sale. They must be provided for everyone if they are to be provided for anyone, and they must be paid for collectively or they cannot be had at all.” \citep{galbraith1998affluent}. Given the non-excludable and non-rival nature of public goods, they cannot be provided satisfactorily through a market mechanism but have to be provided through some form of public action (e.g. via taxation). The public provision does not necessarily entail government provision and public goods can be provided by other actors than governments.

The most common equilibrium concepts used in the theory of public goods economies are Nash and Lindahl equilibria. Both solution concepts are applied to the well-known economic problem of the private provision of public goods \citep{danziger1976graphic, cornes2007aggregative}. In Lindahl equilibrium, determining the level of public goods provided and their financing is solved by adapting the price mechanism so that an efficient allocation remains the result of voluntary market activities. By using the Lindahl approach, individuals face personalized prices (rather than some political choice mechanism or coercive taxation) at which they may purchase public goods. When the system of personalized prices is such that everyone prefers the same 
quantity of the public good, then the economy is in a Lindahl equilibrium. Since each individual buys and consumes the total levels produced of public goods, the price to producers is the sum of the prices individuals pay. Hence, in equilibrium, the supply at these prices equals common demand. Thus, Lindahl equilibrium assumes that the economy is cooperative and brings unanimity about public goods provision, with costs being shared proportionately to (marginal) benefits \citep{shitovitz1998cournot, roberts1989lindahl}.

The Nash equilibrium suggests a solution that is based on the assumption that the economy is non-cooperative. The amount of each public good purchased by any agent is determined independently. By taking the other consumers' purchases as given, each agent adjusts its purchase of the public goods to maximize utility. When each consumer is in a position where he has no incentive to change his purchase the economy is in Nash equilibrium. Because of consumers' myopic interests, one of the key results in the public good literature indicates that if there are at least two individuals, the Nash allocation is not Pareto-optimal \citep{danziger1976graphic}. The inefficiency in non-cooperative games is mainly due to the existence of two consequences of market failures: the first is known as "The Free Rider Problem" which emphasizes the fact that the larger the number of agents, the more each agent is convinced that reducing his contribution will not reduce the size of the public good since the other agents will continue to contribute as before, regardless of the size of his contribution, and it is even possible that the agent will not contribute at all to the production of the public good and will only benefit from its consumption \citep{hampton1987free, kim1984free}. The second market failure is the "Tragedy of the Commons" which was first introduced by \cite{hardin1968the} and deals with the conflict that arises between individuals in society regarding the provision of the common good when limited resources are exploited by those individuals. This leads to the inefficient and unsustainable use of resources, as individuals do not consider the long-term consequences of their actions. Non-cooperative individual actions lead to the overuse of common pool resources and under-provision of public goods. This ultimately leads to a depletion of resources that is not in the long-term interest of society. 

Thus, as \cite{coase1960problem} pointed out, when the economic agents behave according to the Nash equilibrium, they will look for trading opportunities that will generate efficient allocations which strictly Pareto dominates their (inefficient) Nash allocation. As \cite{samuelson1954pure} showed, one could draw a demand or marginal rate of substitution, curve for each individual, and that their vertical summation is the relevant object to equate with the marginal rate of transformation between public and private goods at a Pareto optimum. That is, for each public good one has \( \sum_{i \in \mathscr{N}} MRS_{Y_l-x_i}^{i}=RPT_{Y_l-x_i}\), where \(RPT_{Y_l-x_i}\) is the rate of technological substitution in production between public (\(Y_l\)) and private (\(x_i\)) goods. Since the Lindahl allocation is a vector of individual prices that satisfies the Samuelson criterion, it is always Pareto-optimal \citep{foley1970lindahl}. Nevertheless, the Lindahl allocation, while Pareto efficient, does not Pareto dominate (utility-wise) the Nash allocation, since some agents might strongly prefer their Nash consumption bundle over their Lindahl consumption bundle (see \cite{shitovitz1998cournot} examples 1 and 2 on page 6). 

Scholars have long compared the performance of these two equilibria in a public goods economy \citep{dijkstra2020pareto, danziger1976graphic}. With an explanatory purpose, \cite{ danziger1976graphic} graphically compared the non-cooperative Nash equilibrium and the cooperative Lindahl equilibrium, with the understanding that the transition from the Nash to the Lindahl equilibrium can be considered as caused by an income and a substitution effect. According to his findings, in a sufficiently large economy, the transition from a Nash to a Lindahl equilibrium will benefit everybody. However, in small economies, an ethical evaluation of the scattering of advantages and disadvantages may be necessary before deciding whether or not the Lindahl equilibrium should be preferred to the Nash equilibrium.

In a seminal paper, \cite{champsaur1975share} provided proof that the core of an economy with one private good and one public good is a vN\&M stable set. Based on this result and the work of \cite{peleg1986proof},  \cite{shitovitz2001stable} showed that in the same model setup, there exists a core allocation that Pareto dominates the inefficient Nash allocation with a finite number \(n \ge 2\) of consumers. This result was generalized in \cite{perets2012trading} to such an economy with a mixed measure space of consumers. The authors introduced a new equilibrium concept called "Trading Equilibrium", which is based on a personalized price mechanism that yields a core allocation which strictly Pareto dominates the unique Nash allocation in that market. In an economy with one private good, one public good, and linearly additive production technology of the public good, and by using the Nash allocation as the consumers' initial bundle (an approach similar to \cite{dijkstra2020pareto} and \cite{ chen2018collective}), the authors showed that the consumers will have the incentive to trade among them to achieve a different allocation which is efficient and Pareto dominates the one they currently have. The importance of the "Trading Equilibrium" is that it provides an efficient cooperative solution that Pareto dominates the non-cooperative Nash equilibrium. Unlike \citeauthor{danziger1976graphic}'s (1976) descriptive research method, these researchers used mathematical tools to obtain their results. However, their economy contained only one private and one public good. In the present study, we will relax this limitation and assume a finite number of private and public goods.

An interesting approach to the comparison between the models (albeit in a cooperative version of Nash equilibrium) was proposed by Dijkstra and Nentjes (2020) in the area of environmental economics. The authors compare the Exchange-Matching-Lindahl (EML) solution (a bottom-up mechanism) and the Nash Bargaining solution (a top-down mechanism) for the provision of one public good. Under the EML solution, countries are offered an exchange rate that specifies the ratio between global and national abatement. In equilibrium, exchange rates are such that each country demands the same amount from the world. This amount is the sum of all countries' supply, and the equilibrium is Pareto-efficient. Their results indicate that in a setting with two agents both mechanisms are equivalent. However, in a setting with more than two agents, EML benefits all agents. Unlike the current study, the researchers' work was limited to one public good and a quadratic utility function only.

As suggested in the studies reviewed, the use of cooperative game theory can help to address the issue of inefficient allocation in economies characterized by voluntary self-provision of public goods. However, this approach is not without its own challenges. Cooperative game theory can be difficult to implement in practice \citep{danziger1976graphic}, and the models can be complex and difficult to solve. Despite these challenges, the use of cooperative game theory can help to improve the efficiency of the provision of global public goods. This is an important issue, as global public goods are essential for the functioning of society. 

\section{Public goods economy with asymmetric agents}
\label{sec:model}
Despite the Nash equilibrium allocation's lack of Pareto efficiency, economic agents, once having attained it, exhibit no incentive to deviate given the choices of their counterparts. This is where the Trading equilibrium intervenes. It takes the Nash allocation as a starting point as if it were the initial bundle for the agents and sets personalized prices for each public good. These prices are tailored to each agent in a manner conducive to facilitating trade, thereby engendering a utility increase for at least one agent while maintaining indifference for others. Consequently, the Trading equilibrium supersedes the Nash allocation in Pareto dominance. 

Intuitively, consider a population of agents that can purchase (or contribute to) two types of goods: private and public. The former contributes solely to the utility of the agent, while the latter contributes to the utility of all agents within the population. All the agents are rational and all-knowing. The overarching objective of the population, and by extension, the economy, is to converge towards a stable equilibrium state.  

Formally, consider a pure public good economy with a finite-size set of agents (states), \(\mathscr{N} := \{1,...,N\}\), which can purchase \(\mathscr{K} := \{1,...,K\}\) private goods and contribute to the production of \(\mathscr{L} := \{1,...,L\}\) pure global public goods. Each agent, \(i \in \mathscr{N}\), is endowed with a pre-defined strictly positive amount \(\omega_i=(\omega_i^1,...,\omega_i^K)\in \mathbb{R}_{++}^K\) of the private goods. Let the vectors \(x_i=(x_i^1,...,x_i^K)\in \mathbb{R}_+^K\) represent units of the private goods each consumer, \(i\in \mathscr{N}\), decided to consume from his initial endowment. We assume that \(0\le x_i \le \omega_i\) for all consumers. Each agent, \(i \in \mathscr{N}\), has a unique utility function \(u_i(x_i, Y): \mathbb{R}^{K+L} \rightarrow \mathbb{R}\), where \(y_i=(y_i^1,...,y_i^L)\in \mathbb{R}_+^L\) is the vector that represents units of global public goods that were produced by the contribution of agent \(i\in \mathscr{N}\) from the private goods in his possession. Here, any public good can be produced by any private good on a one-to-one basis with a linear production function. Thus, the vector \(Y=(Y_1,...,Y_L)\) represents the total amount of the global public goods produced, such that \(Y_l := \sum_{i\in \mathscr{N}} {y_i^l}\). Namely, each agent's utility is determined by the amount of private goods and the aggregated agents' contribution toward the production of public goods. In relation to the research topic, expenditures on private goods correspond to the budgets of individual government offices within each country, while contributions toward the production of public goods represent the obligatory financial contributions from member states to support the operations of various UN agencies.

In addition, let us assume that the utility functions of the agents, \(\forall i \in [1, \dots, N]: u_i(x_i, Y)\), are continuous, strictly quasi-concave, and that agents derive positive marginal utility from the private and public goods. This assumption is used to make sure the utility functions are aligned with these empirically measured. Hence, all commodities are strictly desired over \(\mathbb{R}_{++}^{K+L}\) and agents are indifferent between all bundles on the boundary of \(\mathbb{R}_+^{K+L}\). Furthermore, we assume Ordinal Separability of the utility functions and the Strict Ordinal Normality for private and public goods (That is, increasing the amount of the private good or decreasing the amount of the public good will lead to an increase in the value of the marginal rate of substitution)
\begin{equation}
\frac{\partial MRS_{Y_l-x_i^k}^i}{\partial Y_l}<0 \; \wedge \; \frac{\partial MRS_{Y_l-x_i^k}^i}{\partial x_i^k}>0, \; s.t. \; MRS_{Y_l-x_i^k}^i = \frac {\frac{\partial u_i(x_i,Y)}{\partial Y_l}} {\frac{\partial u_i(x_i,Y)}{\partial x_i^k}}.
    \label{eq:mrs}
\end{equation}

Based on this model definition, a rational agent aims to optimize its own utility function. One can identify two types of equilibria: the Nash and Trading equilibria. The utility does not originate in the initial allocation as it is drawn from the "strategy" which is the division of the initial allocation between the purchase of the private goods and the contribution to the production of the public goods. The Nash (or the voluntary) provision of pure public goods is characterized by the non-cooperative game theoretic notion of Nash equilibrium. Thus:
\begin{definition}\label{Nashdef}
The Nash equilibrium in our economy is a \(N\)-tupple strategy \(s^*=(s_1^*,...,s_N^*)\)  such that:
\begin{enumerate}
\item \label{Nash1}
For each \(i\in \mathscr{N}\), \(s_i^*=(x_i^*,y_i^*)\in \mathscr{S}_i\) where:  \begin{equation*}
\mathscr{S}_i= \left. \begin{cases}  (x_i,y_i)\in \mathbb{R}_+^{K+L}:& \begin{array}{r@{}} \sum_{k\in \mathscr{K}} x_i^k+\sum_{l\in \mathscr{L}}y_i^l \le \sum_{k\in \mathscr{K}} \omega_i^k \\ 
x_i^k \le \omega_i^k \qquad \forall k\in \mathscr{K}\end{array} \end{cases}\right \}
\end{equation*}
 \item \label{Nash2}
 For each \(l\in \mathscr{L}\) we have \(Y_l^*=\sum_{i\in \mathscr{N}}y_i^{l^*}\) 
 \item \label{Nash3}
 For each \(i\in \mathscr{N}\) we have \(U_i(s^*) \ge U_i(s_i,s_{-i}^*)\)
\end{enumerate}
Where \(U_i(s^*)=U_i(s_1^*,...,s_N^*)=u_i(x_i^*,\sum_{j \in \mathscr{N}}(\omega_j - x_j^*))\). 
\end{definition}

Importantly, since the private and public goods in the proposed framework are by themselves symmetric in nature, we treat the allocation as a set rather than a vector. As such, symmetric allocations are treated as identical. For example, let us assume a single private good and two public goods, indicated by \(x^1, Y_1, Y_2\). For agent \(i\), both strategies \((x_i^1, y_i^1, y_i^2) = (1, 3, 2) \text{ and } (1, 2, 3)\) are identical if and only if the utility of the agent from both strategies is identical. This result is possible when both strategies lead to the same final allocation with the same amount of each of the public goods. Hence, there can be a unique Nash equilibrium \textit{allocation} with an infinite number of \textit{strategies} leading to that allocation. For more details, see the example in the appendix.

In a similar manner, the \textit{Trading equilibrium} of the proposed model is defined as follows:
\begin{definition}\label{Tradedef}
The Trading Equilibrium in our economy is an allocation  \((\bar{\bar{x}}_i,\bar{\bar{Y}})\) and a non-negative personalized
price vector \(\bar{\bar{p}}_i=(\bar{\bar{p}}_i^1,...,\bar{\bar{p}}_i^L)\) such that:
\begin{enumerate}
\item \label{Trade1}
\(\sum_{k\in \mathscr{K}} \sum_{i\in \mathscr{N}} \bar{\bar{x}}_i^k+ \sum_{l\in \mathscr{L}}\bar{\bar{Y}}_l=\sum_{k\in \mathscr{K}} \sum_{i\in \mathscr{N}} x_i^{k*}+\sum_{l\in \mathscr{L}} Y_l^* =\sum_{k\in \mathscr{K}} \sum_{i\in \mathscr{N}} \omega_i^k \)

\item \label{Trade2}
 \(\sum_{k\in \mathscr{K}} \bar{\bar{x}}_i^k+ \sum_{l\in \mathscr{L}}\bar{\bar{p}}_i^l \bar{\bar{Y}}_l\le  \sum_{k\in \mathscr{K}} x_k^*+\sum_{l\in \mathscr{L}} \bar{\bar{p}}_i^l Y_l^*\) For each \(i\in \mathscr{N}\) 
 
 \item \label{Trade3}
 For each \(i\in \mathscr{N}\) and for each \((x_i,Y)\), that satisfies  \(\sum_{k\in \mathscr{K}} x_i^k+ \sum_{l\in \mathscr{L}}\bar{\bar{p}}_i^l Y_l\le  \sum_{k\in \mathscr{K}} x_k^*+\sum_{l\in \mathscr{L}} \bar{\bar{p}}_i^l Y_l^*\), we have \(u_i(\bar{\bar{x}}_i,\bar{\bar{Y}})\ge u_i(x_i,Y)\)
 
 \item \label{Trade4}
 For each \(l\in \mathscr{L}\) we have \(\sum_{i\in \mathscr{N}} \bar{\bar{p}}_i^l=1\)
\end{enumerate}
\end{definition}
Recall that \((x_1^*,...,x_K^*;Y_1^*,...,Y_L^*)\) represents the Nash allocation. Moreover, by the desirability of the commodities we have in
condition \ref{Trade2} of definition \ref{Tradedef} budget equality (Thus, the budget available for the Trading equilibrium bundle for an agent is the monetary value of his Nash equilibrium bundle). Following these two definitions, we show that the Nash and Trading equilibria exist and are unique. 

\begin{theorem}
For the proposed economy, a unique Nash allocation exists.
\end{theorem}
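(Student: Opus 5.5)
Existence comes from a standard fixed-point argument. Uniqueness of the \emph{allocation} (not of the strategy profile, which the paper already allows to be non-unique) comes from the normality condition (\ref{eq:mrs}), following the one-private/one-public-good argument of \cite{shitovitz2001stable} and \cite{perets2012trading}.

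\textbf{Existence.} Each strategy set $\mathscr{S}_i$ is a nonempty, compact, convex polytope in $\mathbb{R}^{K+L}_+$; it contains $(\omega_i,0)$. The payoff $U_i(s_i,s_{-i})=u_i\bigl(x_i,\sum_j y_j\bigr)$ is continuous in the whole profile. It is quasi-concave in $s_i$, because it is the composition of the strictly quasi-concave $u_i$ with an affine map of $s_i$ for fixed $s_{-i}$. The best-response correspondence is therefore nonempty, convex-valued and upper hemicontinuous (Berge's maximum theorem). Kakutani's theorem, in the Debreu--Glicksberg--Fan form, then gives a Nash equilibrium $s^*$.

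I would also record the first-order conditions at $s^*$. The paper's desirability assumption says boundary bundles are indifferent to one another and worse than interior ones, so each agent chooses interior consumption with the budget binding. Since public goods are produced one-for-one from any private good, all relative prices equal one. Hence for every contributing agent $i$ and every $l,k$ we get $MRS^i_{Y_l-x_i^k}=1$ whenever $y_i^{l*}>0$, and $MRS^i_{Y_l-x_i^k}\le 1$ otherwise.

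\textbf{Uniqueness of the allocation.} Suppose $s^*$ and $s^{**}$ are two equilibria with aggregate public-good vectors $Y^*\neq Y^{**}$. I reduce each agent's problem to a standard consumer problem. Given $Y_{-i}=\sum_{j\ne i}y_j$, agent $i$ chooses $(x_i,Y)$ to maximize $u_i(x_i,Y)$ subject to $\sum_k x_i^k+\sum_l Y_l\le \sum_k\omega_i^k+\sum_l Y_{-i}^l$ and $Y\ge Y_{-i}$. This is a demand problem with unit prices and "full income" $W_i=\sum_k\omega_i^k+\sum_l Y^l_{-i}$. Strict quasi-concavity makes the solution unique, so each agent's private bundle $x_i$ is determined by $Y$.

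Then I argue by contradiction. Without loss of generality, some component satisfies $Y_l^{**}>Y_l^*$. By feasibility, the total private consumption $\sum_i\sum_k x_i^k$ must be smaller at $s^{**}$. So some agent $i$ with $y_i^{l**}>0$ must have lower private consumption, meaning $x_i^{k**}<x_i^{k*}$ for some $k$.

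For that agent, strict ordinal normality gives a strict chain
\[
MRS^i_{Y_l-x_i^k}(x_i^{**},Y^{**})<MRS^i_{Y_l-x_i^k}(x_i^{*},Y^{**})<MRS^i_{Y_l-x_i^k}(x_i^{*},Y^{*})\le 1 ,
\]
where the first inequality uses the fall in $x_i^k$ and the second uses the rise in $Y_l$. But the first-order conditions at $s^{**}$ require $MRS^i_{Y_l-x_i^k}(x_i^{**},Y^{**})=1$, a contradiction. So $Y^*=Y^{**}$. Each $x_i$ is then pinned down by the uniqueness of the best response given $Y$, so the allocation $(x_1^*,\dots,x_N^*;Y^*)$ is unique.

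\textbf{Main obstacle.} The hard step is the multi-good comparison in the uniqueness argument. With several public goods, $Y^{**}$ may be larger in some coordinates and smaller in others. The chain of inequalities above also implicitly moves only one coordinate at a time, while the conditions in (\ref{eq:mrs}) are partial-derivative statements.

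My first attempt would use ordinal separability to aggregate. I would write $u_i$ as $F_i\bigl(v_i(x_i),w_i(Y)\bigr)$, with $v_i$ and $w_i$ sub-utility indices for the private and public goods. This reduces the comparison to a scalar private index and a scalar public index, where the one-private/one-public argument of \cite{shitovitz2001stable} applies directly to the indices. I would then recover uniqueness of the components from the strict quasi-concavity of $v_i$ and $w_i$ under the linear cost structure.

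If that fails, the fallback is to use the symmetric linear technology. Under it, only the aggregate total contribution $\sum_l Y_l$ and its within-agent split matter. I would run the contradiction on the total contribution first and then argue coordinatewise via the paper's convention that allocations are treated as sets.
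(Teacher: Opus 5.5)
Your existence argument is correct; it is the clean version of the paper's Kakutani sketch. Your uniqueness route is the device of the paper's appendix: interior conditions $MRS=1$ plus strict ordinal normality, by contradiction. (The paper's main-text argument is not usable, because plugging $s^*_{i,2}$ into the Nash inequality at $s_1^*$ bounds $U_i(s^*_{i,2},s^*_{-i,1})$, not $U_i(s_2^*)$.)

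The genuine gap is the step you flag, and neither remedy closes it. With $L\ge 2$, $Y_l^{**}>Y_l^*$ does not give $\sum_l Y_l^{**}>\sum_l Y_l^*$. Even when total private consumption does fall, nothing produces an agent with $y_i^{l**}>0$ and $x_i^{k**}<x_i^{k*}$ for that same $l$, since agents can shift contributions from $l'$ to $l$ while keeping or raising $x_i$. Under separable preferences the KKT conditions in fact imply that every agent who funds at $s^{**}$ a good whose level rose has $x_i^{**}\ge x_i^*$ componentwise (see below). So the agent your chain needs cannot be obtained from feasibility, and the contradiction has to come from aggregation.

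Remedy (a) fails because $w_i$ is agent-specific and nonlinear in contributions. There is then no common scalar public good produced at a constant price, and the one-public-good theorem does not apply to the indices. Remedy (b) fails because utilities depend on the vector $Y$, not on $\sum_l Y_l$. The paper's ``allocations as sets'' convention only identifies strategy profiles that yield the same $(x,Y)$, not distinct $Y$ vectors. The paper's three cases share this weakness: Case 3 passes from the failure of ``$x_i^{k*}>x_i^{k**}$ for all $i,k$'' to a contributor to $l$ with $x_i^{k*}<x_i^{k**}$, which does not follow. So the fix is not there either.

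The missing idea is to compare marginal utilities of income and to aggregate over $L^+=\{l:Y_l^{**}>Y_l^*\}$. Take the reading of ordinal separability under which (\ref{eq:mrs}) may be applied one coordinate at a time. That is, $u_i$ is an increasing transform of $\sum_k f_{ik}(x^k)+\sum_l g_{il}(Y_l)$, so $MRS^i_{Y_l-x_i^k}=g_{il}'(Y_l)/f_{ik}'(x_i^k)$ (in the paper's example this is $2x_i^1/Y_1$). By (\ref{eq:mrs}), $f_{ik}'$ and $g_{il}'$ are then strictly decreasing.

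With budget multiplier $\lambda_i$, the KKT conditions are:
\begin{itemize}
\item $f_{ik}'(x_i^k)\ge\lambda_i$, with equality if $x_i^k<\omega_i^k$;
\item $g_{il}'(Y_l)\le\lambda_i$, with equality if $y_i^l>0$.
\end{itemize}
Note the cap $x_i^k\le\omega_i^k$ in $\mathscr{S}_i$, which your unit-price reformulation drops. Because of it, $MRS=1$ holds only for $k$ with $x_i^k<\omega_i^k$.

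Suppose $y_i^{l**}>0$ for some $l\in L^+$. Then $\lambda_i^{**}=g_{il}'(Y_l^{**})<g_{il}'(Y_l^*)\le\lambda_i^*$. Since each $x_i^k$ is nonincreasing in $\lambda_i$, we get $x_i^{**}\ge x_i^*$, so $i$'s total contribution $T_i=\sum_k(\omega_i^k-x_i^k)$ weakly falls. Moreover, every $l'$ with $y_i^{l'*}>0$ has $g_{il'}'(Y_{l'}^*)=\lambda_i^*>\lambda_i^{**}\ge g_{il'}'(Y_{l'}^{**})$, i.e.\ $l'\in L^+$. Letting $A$ be the set of such agents,
\[
\sum_{l\in L^+}Y_l^{**}=\sum_{i\in A}\sum_{l\in L^+}y_i^{l**}\le\sum_{i\in A}T_i^{**}\le\sum_{i\in A}T_i^{*}=\sum_{i\in A}\sum_{l\in L^+}y_i^{l*}\le\sum_{l\in L^+}Y_l^{*},
\]
which contradicts the strict increase on $L^+$. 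By symmetry, $Y^*=Y^{**}$.

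The same comparison with $Y^*=Y^{**}$ then gives $x_i^*=x_i^{**}$. This is also what your last step needs. Best responses are unique given $Y_{-i}$, not given $Y$, and two equilibria with the same $Y$ may have different $Y_{-i}$.
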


\begin{proof}
In order to show that Nash allocation exists, we need to show that there is an allocation of all the agents in the population such that it is not beneficial for any agent to alter its allocation. Formally, let us define a set of functions \(\mu_i(S) \rightarrow \mathbb{R}\) that get the allocation of the entire population and return the utility of the \(i_{th}\) agent. Following the original proof by Nash himself, in order to prove the existence of a Nash equilibria, we use the Kakutani fixed-point theorem \citep{yu2016equivalence, nash1950equilibrium}. To this end, we first have to set \(Z\) and \(\Phi\) where \(Z\) is a non-empty, compact, and convex subset of some Euclidean space \(\mathbb{R}^n\) and \(\Phi: Z \rightarrow 2^Z\) such that \(\Phi\) has a closed graph and is non-empty and convex for all \(x \in Z\). For our economy, \(S\) is defined by the intersection of the amount of private and public goods each agent in the economy can buy - \(\forall i \in N: x_i + y_i \leq \omega_i\). Under the assumption, at least one agent, \(i \in N\), has resources; \(Z\) is non-empty. In addition, \(Z\) is the intersection of linear constructions that results in a compact and convex subset of \(\mathbb{R}^n\). The conditions for \(\Phi\) are also met according to the economy's agents' utility functions assumption. As such, \(\mu := \sum_{i \in N} \mu_i(S)\), has a fixed point, and therefore the economy has a Nash equilibrium. 

In addition, we wish to show that the Nash equilibrium has a unique allocation.
Let us falsy assume there are two Nash equilibria, \(s^*_1\) and \(s^*_2\) such that \(s^*_1 \neq s^*_2\). Following the third condition in the Nash equilibrium definition, for every agent in the population, \(i \in N\), we must have \(U_i(s^*_1) \geq U_i(s_i, s^*_{-i, 1})\) and \(U_i(s^*_2) \geq U_i(s_i, s^*_{-i, 2})\). In particular, \(s^*_1\) should satisfy the second condition and the other way around. As such, setting both conditions, we obtain that \(U_i(s^*_1) \geq U_i(s^*_2)\) and \(U_i(s^*_2) \geq U_i(s^*_1)\) which leads to \(U_i(s^*_1) = U_i(s^*_2)\). However, this contradicts our initial assumption that \(s^*_1 \neq s^*_2\) and therefore there is only one Nash equilibrium. For a detailed proof, please refer to the appendix.

\end{proof}

\begin{theorem}
For the proposed economy, a Trading allocation exists and is unique inside the allocation space.
\end{theorem}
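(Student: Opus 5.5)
The plan is to treat the Trading equilibrium as a Lindahl equilibrium of an auxiliary exchange economy in which each agent's initial bundle is her Nash bundle \((x_i^*,Y^*)\) from the previous theorem, and the "endowment" of agent \(i\) is valued at personalized prices. I would first fix the Nash allocation given by the previous theorem, so that the right-hand side of condition \ref{Trade2} in Definition \ref{Tradedef} is a fixed number once \(\bar{\bar{p}}_i\) is given. The price space is \(P=\{(p_1,\dots,p_N)\in\mathbb{R}_+^{NL}:\sum_{i}p_i^l=1\ \forall l\}\), a product of \(L\) simplices, which is nonempty, compact and convex.

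\textbf{Existence.} For each \(p\in P\) and each \(i\), define the demand \(d_i(p)=(x_i(p),Y^{(i)}(p))\) as the maximizer of \(u_i\) over the budget set \(\{(x_i,Y)\ge 0:\sum_k x_i^k+\sum_l p_i^lY_l\le \sum_k x_i^{k*}+\sum_l p_i^lY_l^*\}\). Strict quasi-concavity makes \(d_i\) single-valued, and continuity of \(u_i\) gives continuity of \(d_i\) by Berge's maximum theorem. The one caveat is that the budget set is unbounded when some \(p_i^l=0\); I would handle this by truncating at the aggregate resource bound \(\sum_{i,k}\omega_i^k\), the usual device in Foley's (1970) existence proof for Lindahl equilibrium. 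Next I would define the Lindahl excess-demand-style price adjustment map. For each public good \(l\), the map raises \(p_i^l\) for agents whose demanded \(Y_l\) exceeds the average demanded level and lowers it otherwise, then renormalizes so that \(\sum_i p_i^l=1\). Applying Brouwer's (or Kakutani's) fixed point theorem on \(P\) gives a \(\bar{\bar{p}}\) at which all agents demand the same \(\bar{\bar{Y}}\). Condition \ref{Trade1} then follows by summing the budget equalities over \(i\) and using \(\sum_i\bar{\bar{p}}_i^l=1\), where budget equality itself comes from monotonicity of preferences. Condition \ref{Trade3} is the optimality of \(d_i\), and condition \ref{Trade4} holds by construction. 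Since the Nash bundle is affordable, each agent is at least as well off as at Nash, which recovers the Pareto-domination property of \citeauthor{perets2012trading}.

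\textbf{Uniqueness.} Here I would exploit the strict ordinal normality assumption \eqref{eq:mrs}. At an equilibrium, tangency gives \(MRS^i_{Y_l-x_i^k}=\bar{\bar{p}}_i^l\) for interior solutions, and summing over \(i\) yields the Samuelson condition \(\sum_i MRS^i_{Y_l-x_i^k}=1\). Suppose two Trading equilibria \((\bar{\bar{x}},\bar{\bar{Y}},\bar{\bar{p}})\) and \((\tilde{x},\tilde{Y},\tilde{p})\) existed, both with the common Nash starting point. I would compare them good by good. Without loss of generality take \(\tilde{Y}_l>\bar{\bar{Y}}_l\) for some \(l\). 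Then \(\sum_i\tilde{p}_i^l=\sum_i\bar{\bar{p}}_i^l=1\), so some agent \(j\) has \(\tilde{p}_j^l\le\bar{\bar{p}}_j^l\). Normality in both arguments then forces a contradiction: a higher \(Y_l\) at a lower-or-equal MRS requires lower private consumption, while the budget identity, evaluated at a weakly lower price, requires weakly higher private consumption. If \(\bar{\bar{Y}}=\tilde{Y}\), then each agent faces the same tangency conditions, and strict quasi-concavity yields identical \(x_i\) and identical prices.

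\textbf{Main obstacle.} I expect the hardest step to be this uniqueness argument with \(L>1\) public goods. There, the cross effects between different \(Y_l\)'s in the MRS are not controlled by \eqref{eq:mrs} alone, so the single-good monotone comparison may fail. I would first try to use the ordinal separability assumption to decouple the public goods, so that each \(Y_l\) can be treated as in the one-good case of \citeauthor{perets2012trading}. If that fails, I would restrict uniqueness to the allocation (up to the symmetric identification of strategies discussed after Definition \ref{Nashdef}) rather than to the price vector.
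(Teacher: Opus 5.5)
\textbf{Existence.} Your existence argument takes a different route from the paper. The paper obtains the Trading allocation as a maximizer of $\sum_i u_i$ over the set cut out by the resource constraint, the normalization $\sum_i p_i^l=1$, non-negativity and individual-rationality constraints against the Nash bundle. It then infers uniqueness from uniqueness of that optimum off the boundary.

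Your Foley-type fixed point on the product of price simplices fits Definition~\ref{Tradedef} better. It actually produces the personalized prices and delivers conditions~\ref{Trade2}--\ref{Trade4} directly. An individually rational maximizer of the utility sum is only Pareto efficient, and need not be supportable by prices with each budget anchored at the value of the agent's Nash bundle.

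To close your argument you still have to show that fixed points of your clipped-and-renormalized adjustment map have equal demands.
\begin{itemize}
\item Suppose the renormalizing factor for good $l$ exceeded $1$. Then every agent with $p_i^l>0$ would demand strictly more than the average, so the agent below the average would have $p_i^l=0$. But such an agent demands the truncation bound, which is at least the average. So the factor is $1$.
\item If some $p_i^l=0$, the same reasoning forces all agents to demand the bound. Summing budgets then forces $x_i=0$, contradicting optimality, since $x_i^*\gg0$ gives positive income.
\end{itemize}
Hence all prices are positive and the truncation is slack. Quasi-concavity then makes the truncated demand optimal in the true budget set.

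\textbf{Uniqueness for $L=K=1$.} The genuine gap is in uniqueness, and it appears already with $L=K=1$. You pick an agent with $\tilde{p}_j\le\bar{\bar{p}}_j$ and assert two implications, both false.
\begin{itemize}
\item Since $MRS$ decreases in $Y$, a higher $Y$ at a weakly lower $MRS$ is compatible with \emph{higher} private consumption. For $u=\ln x+\ln Y$, compare $(1,1)$ with $(1.5,2)$.
\item The budget gives $\tilde{x}_j-\bar{\bar{x}}_j=\bar{\bar{p}}_j(\bar{\bar{Y}}-Y^*)-\tilde{p}_j(\tilde{Y}-Y^*)$, whose sign is undetermined when $\tilde{p}_j\le\bar{\bar{p}}_j$ and $\tilde{Y}>\bar{\bar{Y}}$.
\end{itemize}
A larger $Y$ with a lower personalized price is exactly what a downward-sloping Lindahl inverse demand predicts, so that agent yields no contradiction.

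The argument must instead use an agent with $\tilde{p}_j\ge\bar{\bar{p}}_j$, which exists because both price vectors sum to $1$. It also needs a step you omit: every Trading equilibrium has $Y>Y^*$. To see this, suppose $Y\le Y^*$. Then $x_i=x_i^*+p_i(Y^*-Y)\ge x_i^*$ for all $i$, and normality gives $p_i=MRS_i(x_i,Y)\ge MRS_i(x_i^*,Y^*)$. Summing yields $1\ge\sum_i MRS_i(x_i^*,Y^*)>1$ for $N\ge2$, because Nash contributors have $MRS_i=1$ and everyone else has positive $MRS$.

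With $\tilde{Y}>\bar{\bar{Y}}>Y^*$ and $\tilde{p}_j\ge\bar{\bar{p}}_j>0$, the budget gives $\tilde{x}_j<\bar{\bar{x}}_j$. Normality then gives $MRS_j(\tilde{x}_j,\tilde{Y})<MRS_j(\bar{\bar{x}}_j,\bar{\bar{Y}})=\bar{\bar{p}}_j\le\tilde{p}_j$, contradicting tangency. The same monotonicity, not strict quasi-concavity, is what pins down $x_j$ and $p_j$ when $\tilde{Y}=\bar{\bar{Y}}$.

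\textbf{Uniqueness for $L>1$ or $K>1$.} Here the proposal offers no argument, and the hoped-for decoupling does not follow from the stated assumptions. The budget comparison becomes $\sum_{l'}\bigl[\bar{\bar{p}}_j^{l'}(\bar{\bar{Y}}_{l'}-Y_{l'}^*)-\tilde{p}_j^{l'}(\tilde{Y}_{l'}-Y_{l'}^*)\bigr]$. Choosing $j$ by its price for one good $l$ controls none of the other terms. With $K>1$ the budget pins down only $\sum_k x_j^k$.

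Meanwhile \eqref{eq:mrs} restricts only own effects. It says nothing about how $MRS_{Y_l-x_j^k}$ responds to $Y_{l'}$ or $x_j^{k'}$, and separability of the private block from the public block leaves those cross effects in place. Falling back to uniqueness of the allocation rather than the prices does not help, since the failure lies in comparing allocations.

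As written, your proposal gives existence once the details above are filled in. After the repair it gives uniqueness only for $L=K=1$; the general uniqueness claim needs a further idea or extra assumptions controlling these cross effects.
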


\begin{proof}
In order to show a Trading equilibrium exists, one is required to find an equilibria with a global utility higher than or identical to the one obtained at the Nash equilibrium. To this end, as each agent in the population proposes searches for the global optimal, we will use the Simplex theorem to show there is an optimum to the objective function \citep{simplex}. In order to use the Simplex theorem we need to show that the objective function is continuous and derivative to all the parameters and at all possible allocations, and the constraints define finite, close, and concave. To this end, the global utility function \(u := \sum_{i \in N} u_i\) is a sum of continuous and strictly quasi-concave functions which means it is also continuous and strictly quasi-concave, as required. Next, the constraints of the Simplex task we define are linear as the sum of contributions of each agent to all the private and public goods is limited by the initial amount of resources it has. We also add the fourth definition of the Trading equilibrium, namely that the sum of prices is 1 as a constraint, which is yet another linear constraint. In addition, by definition, agents can not negatively contribute to either a private or public good which enforces a non-negativity constraint for each agent and for each good. Now, following the second and third conditions for the Trading equilibrium, we need to make sure that the Trading equilibria allocation of each agent is at least as good as one of the Nash equilibria allocations of the same agent. To this end, for each agent \(i \in N\) we add the constraint \(u_i(\bar{\bar{x}}_i,\bar{\bar{Y}})\ge u_i(x_i,Y)\). Since \(u_i\) is assumed to be concave, the intersection of these constraints is still concave as well. Overall, the intersection of constraints defines a finite, close, and concave space. Hence, the proposed economy satisfies the Simplex theorem condition and has a global optimum. This allocation is, by definition, a Trading equilibrium of the proposed economy, showing its existence.  

Now, in order to show that the Trading equilibrium is unique, we can use the \textit{extreme value} theorem \citep{labib_ask} while taking into consideration that \(\forall i \in N: U_i\) is monotonically increasing for every private and public good. Namely, let us falsy assume that there are two trading equilibria, \(t^*_1\) and \(t^*_2\). As both are global optimum of \(\sum_{i \in N} U_i\), according to Simplex theorem this can exist if and only if \(t^*_1\) and \(t^*_2\) are on the edge of the possible allocation space. As such, if one ignores the edge of the allocation space, the function \(\sum_{i \in N} U_i\) is a monotonic and harmonic function. Hence, it obtains a single optimum. 
\end{proof}

We provide more detailed proof of the uniqueness of the Nash equilibrium allocation that closely follows the definition in the appendix. In addition, a numeric example of how one can calculate the Nash and Trading equilibria is also provided in the appendix.

\section{Computer simulation}
\label{sec:simulation}
The financial contributions of member states to the United Nations provide a valuable case study for applying our global public goods provision model with asymmetric agents (states) \citep{pandemic_cool}. The intricate dynamics between states, each characterized by unique economic conditions, political frameworks, and objectives, closely resemble the heterogeneous agent structure inherent in our model. Furthermore, these contributions highlight the critical differences between Nash equilibria and cooperative (trading) equilibria. A purely self-interested strategy, similar to a Nash equilibrium, may result in suboptimal global outcomes, such as a UN initiative benefiting one nation disproportionately while being inconsequential to others. Conversely, a cooperative framework, wherein states work collectively to optimize global utility, can promote more balanced resource allocation and shared economic benefits. Our model’s mechanism of personalized Lindahl prices finds resonance in international agreements designed to allocate contributions based on a nation's economic capacity, preferences, and anticipated benefits. 

To operationalize the proposed model, we developed a computer simulation using Python (version 3.9.2), leveraging historical data to simulate a trading equilibrium. Initially, we assume UN member states historically followed a Nash equilibrium, focusing on national interests with limited regard for broader global outcomes. Utilizing this as a baseline, the model seeks to transition towards the cooperative Trading equilibrium. For the case study, we compiled the contributions of 138 UN member states (for which we have in our possession cross data for both budgets for government ministries and donations to the UN) in U.S. dollars across 10 UN offices (WFP, UNHCR, UNICEF, DPO, IOM, WHO, UN, FAO, UNDP, and UN-DPO)\footnote{The data is available at: \url{https://unsceb.org/fs-revenue-government-donor}}. Additionally, each country was assumed to operate 10 local offices serving as private goods, benefiting solely the nation-state\footnote{The data is available at: \url{https://data.un.org/Data.aspx?d=SNA&f=group_code\%3a301}}. A sum of a linear and logarithmic function based regression model was fitted to estimate each country's resources based on the total used budgets of both UN and local offices (adjusted by the average conversion rate for each year), providing projections for 2024.

Specifically, each country’s unique utility function (\(u_i\)) was estimated by correlating the portion of its budget allocated to different offices with the respective importance it assigns to the services provided. In other words, the utility weight that each country assigns to the services of a government ministry is inferred from the budget allocated to that ministry, fitting the function \(\alpha_1 x + \alpha_2 ln(x)\) where \(\alpha_1, \alpha_2 \in \mathbb{R}^+\) are parameters and the function is aligned with the condition in the proposed model and the law of diminishing return \citep{shephard_fare_1974}. Similarly, the utility weight attributed to each United Nations branch, reflecting the value of the global public good, is derived from the monetary contributions made by each country over time, fitted to the same function. This method, using revealed preferences, helps determine the coefficients in each country’s unique utility function for both national private goods (government ministry services) and global public goods (UN services). This results in the weights \(\zeta_j^i, \xi_{c,1}^i, \xi_{c,2}^i \in \mathbb{R}^+\) for \(j \in [1, \dots, 10], c \in [1, \dots, 1380]\). Formally, the utility function of country \(i\) (for a representative person within the country) takes the form:
\begin{equation}
    u_i := \sum_{j=1}^{10} ( \zeta_{j}^i Y_j ) + \sum_{c=1}^{1380} (\xi_{c,1} x_c + \xi_{c,2} ln(x_c)).
\end{equation}
Notably, there are 10 global public goods in the form of UN offices and each country (138 of them) has 10 local offices that operate as private goods, resulting in 1390 goods in total for this configuration. For country \(k\), \(\xi_{c_1}, \xi_{c,2} \neq 0\) such that \(i \in [10 (k + 1), 10 (k + 1) + 9]\) and \(\xi_{c,1} = \xi_{c,1} = 0\), otherwise. Importantly, this utility function satisfies the proposed model's assumptions. 

We implemented this economic model in a computer simulation using the agent-based simulation approach \citep{agents_1,agents_2,agents_3} where each agent (state) in the population is represented using a finite state machine \citep{agent_finite_state_machine} that contains its available resources as well as its current allocation. The simulation operates in rounds, such that each round is denoted by \(t_i\) such that \(t_i \in [0, T]\) where \(T < \infty\). First, at \(t_0\), the initial amount of resources and utility functions of the agents in the population are allocated. Afterward, at each round \(t_i\), for each agent (state) in the population, it computes its optimal strategy concerning its utility function and assumes all other agents are available to it. The agents update their strategy at the end of each round after all the agents in the population compute their new strategy. This process repeats until at some round, \(t_N\), all agents in the population keep the same strategy as in the \(t_{N-1}\) round. At this point, we iteratively reach the Trading equilibrium using the gradient descent (GD) method \citep{gradient_descent_method}. The GD stops when the gradient's norm is smaller than a pre-defined threshold \(\epsilon \in \mathbb{R}^+\), which indicates the economy reached the Trading equilibrium. 

Fig. \ref{fig:converage_rate} shows the results of this simulation for \(n=100\) initial conditions with \(T = 200\) and a population of size \(N = 138\) (number of member states for which we have data) where the x-axis indicates the simulation round and the y-axis is the normalized average global utility (the absolute values hold limited significance; therefore, we normalize them relative to the final objective, namely the Trading Equilibrium, to derive a meaningful measure of relative improvement). The blue (solid) line indicates the average case in which all the other cases are normalized with respect to it. The green vertical line indicates the average duration taken to coverage, resulting in the Trading equilibria. 

As demonstrated in Figure 1, the implementation of the proposed funding model yields a significant increase in global utility. The initial point reflects the global utility derived from the current funding model utilized by the United Nations, which is based mainly on voluntary contributions rather than mandatory payments. In this scenario, the global utility is valued at 0.941 points, following normalization to the Trading equilibria, representing the Nash equilibrium. However, the global benefit under the proposed trading equilibrium model shows about a 6\% increase. This represents an improvement in global utility. Moreover, the transition to the proposed model results in a marked increase in the utility for individual states, with an increase in the value of the utility ranging from 2.5\% to 7.5\% \footnote{For a comparison between the two funding models for all member states, and for a more detailed comparison for selected countries, please refer to the supplementary materials.}. It is important to note that the transition to the new financing model did not result in increased contributions from all member states. Notably, two of the largest contributors to the United Nations—the United States and China—experienced a slight reduction in their required payments. Nevertheless, both their individual benefits and the overall global benefit were enhanced under the new model, demonstrating the efficiency and equity gains achieved through the revised funding mechanism. 

\begin{figure}[!ht]
    \centering
    \includegraphics[width=0.99\textwidth]{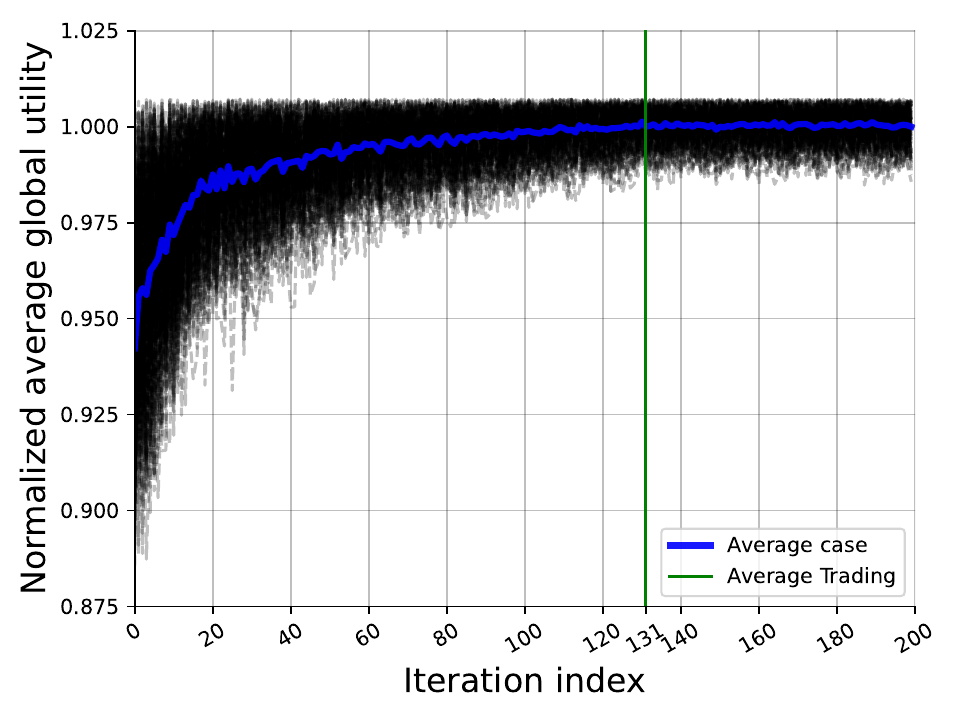}
    \caption{A simulation of UN member countries resource allocation with 10 UN offices for all countries and 10 local offices for each country. The simulation is computed for \(n = 100\) times, each time with slightly different initial conditions. The blue (solid) line presents the average of the simulations. The green vertical line indicates the average duration taken to coverage into the Trading equilibria.}
    \label{fig:converage_rate}
\end{figure}

The proposed model's contribution extends beyond the enhancement of private and global utility, which may not necessarily represent its most significant achievement. Equally important outcomes of the model include the stabilization of the UN budget and the equitable distribution of financial responsibility for UN activities, underscoring its broader impact.

In addition, we computed the number of "free riders" - i.e., agents that did not contribute any resource to the production of the public good\footnote{The free rider phenomenon occurs when individuals or entities benefit from a resource or service without contributing to its cost. For example, in the context of a Nash equilibrium, this phenomenon arises in situations where the equilibrium strategies lead to under-contribution to a public good or resource, due to rational self-interest.}. For the Nash equilibrium of each simulation, we obtain \(2.02 \pm 0.68\) percent of free riders while for the Trading equilibrium is \(1.38 \pm 0.61\). Based on a one-tailed T-test, we show that the number of free riders of the Trading equilibrium is statistically less with \(p < 0.01\).

\section{Discussion and conclusion}
\label{sec:discussion}
The UN's financing mechanisms are crucial for its operational viability and effectiveness in addressing global challenges. The current reliance on earmarked and voluntary contributions poses significant risks to its ability to respond flexibly to crises. Therefore, reforming the funding structure to incorporate diverse and innovative financing strategies is essential for maximizing the UN's capacity to fulfill its mandate in a rapidly changing global landscape.

The present study has sought to extend the analysis of global public goods provision within the framework of the United Nations, applying a novel Trading Equilibrium model to simulate and enhance the financing structure of global public goods. The simulation results reveal significant implications for international cooperation, particularly in providing global public goods through multilateral institutions such as the UN.

The findings underscore that a transition from a Nash equilibrium to a Trading equilibrium yields considerable improvements in global utility. In the Nash equilibrium, each member state acts in its self-interest, contributing suboptimally to the financing of global public goods, resulting in uneven benefits across nations. The Trading equilibrium, by contrast, fosters a cooperative framework where member states’ contributions are aligned with their utility, leading to more efficient allocation of resources. This supports the theoretical assertion that cooperative solutions, modeled through mechanisms such as personalized Lindahl prices, enhance the overall welfare of the international system.

In practical terms, this model can potentially address the inherent inefficiencies associated with free-riding behavior and the uneven financial burden observed in the current UN funding structure. By tailoring contributions to each state’s economic capacity and the benefits they derive, the model mitigates the risk of countries underfunding critical UN programs, particularly in sectors such as peacekeeping and humanitarian aid, which are often underfunded due to voluntary, discretionary contributions.

The implications of this study are profound for future policy directions aimed at reforming the UN’s financing mechanisms. The current system, which heavily relies on voluntary contributions, faces challenges in terms of predictability and equity. Earmarked contributions, as highlighted by other studies, skew the allocation of resources toward the priorities of wealthier donor states, potentially undermining multilateralism. The Trading equilibrium model presents an alternative that aligns more closely with the principle of burden-sharing, while maintaining the autonomy of member states in deciding their contribution levels based on their national interests.

The simulation results suggest that implementing a cooperative framework for UN financing would not only enhance the predictability and sustainability of funding but also ensure that the global public goods provided by the UN benefit all member states more equitably. Such a framework would also support the UN in fulfilling its mandate more effectively, particularly in areas that are traditionally underfunded or politically sensitive.

While the Trading Equilibrium model presents promising outcomes, it is important to note certain limitations of this study. Firstly, the model assumes that all member states are rational actors, willing to cooperate under the right conditions. However, in practice, geopolitical considerations, domestic politics, and differing priorities among member states can hinder cooperation. Future research could explore how such political dynamics might affect the implementation of this model in real-world settings.  In reality, individuals and countries operate with bounded rationality, meaning their decisions are often based on incomplete information and limited processing power. This assumption, while necessary for tractability in the current model, diverges from real-world behavior, where strategic decisions regarding public goods provision are made under uncertainty and with constraints on data availability and computational resources. Future research should aim to relax these assumptions to better reflect the decision-making processes of real-world agents. One possible direction is to incorporate elements of bounded rationality or imperfect information into the model, allowing for the exploration of how these factors impact the convergence to Nash or Trading equilibria. In particular, agent-based simulations that account for different information structures and learning mechanisms could provide valuable insights into how equilibria emerge in more realistic public goods economies.

Moreover, the model's focus on financial contributions does not fully capture other forms of participation and cooperation that are critical to the UN's effectiveness, such as political support, operational capacity, and diplomatic engagement. Expanding the model to account for these non-monetary contributions could provide a more comprehensive understanding of how to optimize global public goods provision in multilateral contexts. To this end, in the countrys' weight fitting procedure (see Section 4), the data distinguish between assessed,
voluntary core (un-earmarked) and voluntary non-core (earmarked) contributions is not made lacking such resolution in the data. Thus, if a country only has assessed contributions to a UN program, this is probably forcing them to spend more than they would have wanted to, which not nessasarly indicates the contry's utility from such spend. 

Taken jointly, the study highlights the potential for significant improvements in the efficiency and equity of UN financing through the implementation of a Trading Equilibrium model. By aligning contributions with both economic capacity and national benefits, the model offers a pathway to more sustainable and cooperative international financing mechanisms. However, further research is needed to account for the complexities of political and non-monetary factors in the international system. The results of this study provide a solid foundation for policymakers and scholars alike to rethink and innovate the ways in which global public goods are financed in the 21st century.

The reliance on voluntary contributions often results in unpredictable and politically influenced funding, limiting the UN’s ability to plan effectively. A shift toward obligatory contributions, structured around personalized pricing models like Lindahl Equilibrium, would stabilize funding and enhance long-term resource allocation for critical global programs.

\section*{Statements and Declarations}
\subsection*{Competing Interests}
The authors have no relevant financial or non-financial interests to disclose.

\subsection*{Funding}
No funds, grants, or other support was received.

\subsection*{Data availability}
The data used in this study is presented in the paper with the relevant references.


\subsection*{Author Contributions} The authors contributed equally.\\ Conceptualization, formal analysis and investigation, validation, original draft, and manuscript editing were performed by Labib Shami. \\ Formal analysis and investigation, software, visualization, original draft, and manuscript editing were performed by Teddy Lazebnik.

\bibliographystyle{chicago}
\bibliography{Trading}

@article{intro_1,
    author = {Kindleberger, C. P.},
    title = "{Dominance and Leadership in the International Economy: Exploitation, Public Goods, and Free Rides}",
    journal = {International Studies Quarterly},
    volume = {25},
    number = {2},
    pages = {242-254},
    year = {1981},
}

@article{intro_2,
title = {Privately provided public goods in a large economy: The limits of altruism},
journal = {Journal of Public Economics},
volume = {35},
number = {1},
pages = {57-73},
year = {1988},
author = {Andreoni, J.},
}

@article{intro_3,
title = {Economic description of tolerance in a society with asymmetric social cost functions},
journal = {Economic research - Ekonomska istraživanja},
volume = {31},
number = {1},
pages = {2548-2593},
year = {2019},
author = {Shi, Y.},
}

@article{intro_4,
    author = {Mailath, G. J. and Postlewaite, A.},
    title = "{Asymmetric Information Bargaining Problems with Many Agents}",
    journal = {The Review of Economic Studies},
    volume = {57},
    number = {3},
    pages = {351-367},
    year = {1990},
}

@article{intro_5,
Author = {Fischbacher, U. and Gachter, S.},
Title = {Social Preferences, Beliefs, and the Dynamics of Free Riding in Public Goods Experiments},
Journal = {American Economic Review},
Volume = {100},
Number = {1},
Year = {2010},
Month = {March},
Pages = {541-556},
}

@ARTICLE{intro_6,
  author={Tesfatsion, L.},
  journal={Artificial Life}, 
  title={Agent-Based Computational Economics: Growing Economies From the Bottom Up}, 
  year={2002},
  volume={8},
  number={1},
  pages={55-82},
}

@article{intro_7,
title = {Varieties of agents in agent-based computational economics: A historical and an interdisciplinary perspective},
journal = {Journal of Economic Dynamics and Control},
volume = {36},
number = {1},
pages = {1-25},
year = {2012},
author = {Chen, S-H.},
}

@article{intro_8,
title={Microfounded tax revenue forecast model with heterogeneous population and genetic algorithm approach},
  author={Alexi, Ariel and Lazebnik, Teddy and Shami, Labib},
  journal={Computational Economics},
  pages={1--30},
  year={2023},
  publisher={Springer}
}

@article{perets2012trading,
  title={Trading equilibrium in a public good economy with smooth preferences and a mixed measure space of consumers},
  author={Perets, Hovav and Shitovitz, Benyamin and Spiegel, Menahem},
  journal={Journal of Mathematical Economics},
  volume={48},
  number={3},
  pages={163--169},
  year={2012},
  publisher={Elsevier}
}

@article{foley1970lindahl,
  title={Lindahl's Solution and the Core of an Economy with Public Goods},
  author={Foley, Duncan K},
  journal={Econometrica: Journal of the Econometric Society},
  pages={66--72},
  year={1970},
  publisher={JSTOR}
}

@article{shitovitz1998cournot,
  title={Cournot--{N}ash and {L}indahl equilibria in pure public good economies},
  author={Shitovitz, Benyamin and Spiegel, Menahem},
  journal={Journal of Economic Theory},
  volume={83},
  number={1},
  pages={1--18},
  year={1998},
  publisher={Elsevier}
}

@article{coase1960problem,
  title={The Problem of Social Cost},
  author={Coase, Ronald H},
  journal={Journal of Law and Economics},
  volume={3},
  pages={1--44},
  year={1960}
}

@article{champsaur1975share,
  title={How to share the cost of a public good?},
  author={Champsaur, Paul},
  journal={International Journal of Game Theory},
  volume={4},
  number={3},
  pages={113--129},
  year={1975},
  publisher={Springer}
}

@article{peleg1986proof,
  title={A proof that the core of an ordinal convex game is a von {N}eumann-{M}orgenstern solution},
  author={Peleg, Bezalel},
  journal={Mathematical Social Sciences},
  volume={11},
  number={1},
  pages={83--87},
  year={1986},
  publisher={Elsevier}
}

@article{shitovitz2001stable,
  title={Stable provision vs. {C}ournot--{N}ash equilibria in pure public good economies},
  author={Shitovitz, Benyamin and Spiegel, Menahem},
  journal={Journal of Public Economic Theory},
  volume={3},
  number={2},
  pages={219--224},
  year={2001},
  publisher={Wiley Online Library}
}

@article{samuelson1954pure,
  title={The pure theory of public expenditure},
  author={Samuelson, Paul A},
  journal={The review of economics and statistics},
  pages={387--389},
  year={1954},
  publisher={JSTOR}
}

@book{galbraith1998affluent,
  title={The affluent society},
  author={Galbraith, John Kenneth},
  year={1998},
  publisher={Houghton Mifflin Harcourt}
}

@book{agents_1,
    author = {Deguchi, H.},
    year = {2004},
    title = {Gaming Simulation and the Dynamics of a Virtual Economy},
    booktitle={Economics as an Agent-Based Complex System},
    publisher   = {Springer}
}

@article{agents_2,
    author = {Heckbert, S. and Baynes, T. and Reeson, A.},
    year = {2010},
    title = {Agent-based modeling in ecological economic},
    journal={Annals of the New York Academy of Sciences},
    volume = {1185},
    pages = {39-63}
}

@inproceedings{agents_3,
  title={Agent-based explanations in {AI}: Towards an abstract framework},
  author={Ciatto, G. and Schumacher, M. I. and Omicini, A. and Calvaresi, D.},
  booktitle={International Workshop on Explainable, Transparent Autonomous Agents and Multi-Agent Systems},
  pages={3-20},
  year={2020},
  publisher={Springer}
}

@INPROCEEDINGS{agent_finite_state_machine,
    author = {Sakellariou, I.},
    year = {2002},
    title = {Agent based Modelling and Simulation using State Machines},
    booktitle={Proceedings of the 2nd International Conference on Simulation and Modeling Methodologies, Technologies and Applications},
    pages = {270-279}
}

@article{gradient_descent_method,
  title={The method of steepest descent for non-linear minimization problems},
  author={Curry, Haskell B},
  journal={Quarterly of Applied Mathematics},
  volume={2},
  number={3},
  pages={258--261},
  year={1944}
}

@article{simplex,
title = {An overview on the simplex algorithm},
journal = {Applied Mathematics and Computation},
volume = {210},
number = {2},
pages = {479-489},
year = {2009},
author = {Nabli, H.},
}

@article{dijkstra2020pareto,
  title={Pareto-efficient solutions for shared public good provision: Nash bargaining versus exchange-matching-lindahl},
  author={Dijkstra, Bouwe R and Nentjes, Andries},
  journal={Resource and Energy Economics},
  volume={61},
  pages={101179},
  year={2020},
  publisher={Elsevier}
}

@article{chen2018collective,
  title={Collective action in an asymmetric world},
  author={Chen, Cuicui and Zeckhauser, Richard},
  journal={Journal of Public Economics},
  volume={158},
  pages={103--112},
  year={2018},
  publisher={Elsevier}
}

@article{danziger1976graphic,
  title={A graphic representation of the {N}ash and {L}indahl equilibria in an economy with a public good},
  author={Danziger, Leif},
  journal={Journal of Public Economics},
  volume={6},
  number={3},
  pages={295--307},
  year={1976},
  publisher={Elsevier}
}

@book{roberts1989lindahl,
  title={Lindahl equilibrium},
  author={Roberts, John},
  year={1989},
  publisher={Springer}
}

@article{kim1984free,
  title={The free rider problem: Experimental evidence},
  author={Kim, Oliver and Walker, Mark},
  journal={Public choice},
  volume={43},
  number={1},
  pages={3--24},
  year={1984},
  publisher={Springer}
}

@article{hampton1987free,
  title={Free-rider problems in the production of collective goods},
  author={Hampton, Jean},
  journal={Economics \& Philosophy},
  volume={3},
  number={2},
  pages={245--273},
  year={1987},
  publisher={Cambridge University Press}
}

@article{hardin1968the,
  title={The Tragedy of the Commons},
  author={Hardin, Garrett},
  journal={Science},
  volume={162},
  number={3859},
  pages={1243--1248},
  year={1968}
}

@article{nash1950equilibrium,
  title={Equilibrium points in n-person games},
  author={Nash Jr, John F},
  journal={Proceedings of the national academy of sciences},
  volume={36},
  number={1},
  pages={48--49},
  year={1950},
  publisher={National Acad Sciences}
}

@article{yu2016equivalence,
  title={Equivalence results between {N}ash equilibrium theorem and some fixed point theorems},
  author={Yu, Jian and Wang, Neng-Fa and Yang, Zhe},
  journal={Fixed Point Theory and Applications},
  volume={2016},
  pages={1--10},
  year={2016},
  publisher={Springer}
}

@inproceedings{arrow1951extension,
  title={An extension of the basic theorems of classical welfare economics},
  author={Arrow, Kenneth J},
  booktitle={Proceedings of the second Berkeley symposium on mathematical statistics and probability},
  volume={2},
  pages={507--533},
  year={1951},
  organization={University of California Press}
}

@article{mcginty2013public,
  title={Public goods provision by asymmetric agents: experimental evidence},
  author={McGinty, Matthew and Milam, Garrett},
  journal={Social Choice and Welfare},
  volume={40},
  pages={1159--1177},
  year={2013},
  publisher={Springer}
}

@article{deneulin2007public,
  title={Public goods, global public goods and the common good},
  author={Deneulin, S{\'e}verine and Townsend, Nicholas},
  journal={International journal of social economics},
  volume={34},
  number={1/2},
  pages={19--36},
  year={2007},
  publisher={Emerald Group Publishing Limited}
}

@article{wyrobek2022public,
  title={Public goods in economics and other social sciences},
  author={Wyrobek, Joanna and Rosiek, Ksymena and Ni{\.z}nik, Joanna},
  journal={Public Goods and the Fourth Industrial Revolution},
  pages={35--69},
  year={2022}
}

@article{labib_ask,
  title={On Weierstrass extreme value theorem},
  author={Martínez-Legaz, J. E.},
  journal={Optimization Letters},
  pages={391-393},
  volume = {8},
  year={2014}
}

@article{cornes2007aggregative,
  title={Aggregative public good games},
  author={Cornes, Richard and Hartley, Roger},
  journal={Journal of Public Economic Theory},
  volume={9},
  number={2},
  pages={201--219},
  year={2007},
  publisher={Wiley Online Library}
}

@article{pandemic_important,
    author = {Conti, A. A.},
    title = "{Historical and methodological highlights of quarantine measures: from ancient plague epidemics to current coronavirus disease (COVID-19) pandemic}",
    journal = {Acta bio-medica : Atenei Parmensis},
    volume = {91},
    number = {2},
    pages = {226-229},
    year = {2020}
}

@online{who_data,
    author = "WHO",
    title = "WHO Coronavirus Disease (COVID-19) Dashboard",
    year = {2022},
    url  = "https://covid19.who.int/",
    addendum = "(accessed: 27.03.2022)"
}

@article{close_1,
title = {Evaluation of the effect of border closure on COVID-19 incidence rates across nine African countries: an interrupted time series study},
journal = {Trans R Soc Trop Med Hyg},
year = {2021},
author = {Emeto, T. I. and Alele, F. O. and Ilesanmi, O.S.},
}

@article{close_2,
title = {Analysing governmental response to the COVID-19 pandemic},
journal = {Journal of Oral Biology and Craniofacial Research},
year = {2021},
author = {Imtyaz, A. and Haleem, A. and Javaid, M.},
}

@ARTICLE{close_bad,
AUTHOR={Zhang, H.},
TITLE={Challenges and Approaches of the Global Governance of Public Health Under COVID-19},
JOURNAL={Frontiers in Public Health},
VOLUME={9},
YEAR={2021},
}

@article{better_together,
  title={Health security capacities in the context of COVID-19 outbreak: an analysis of International Health Regulations annual report data from 182 countries},
  author={Kandel, N. and Chungong, S. and Omaar, A. and Xing, J.},
  journal={The Lancet},
  volume={395},
  number={10229},
  pages={1047-1053},
  year={2020}
}

@article{pandemic_cool,
  title={The Corona-Pandemic: A Game-Theoretic Perspective on Regional and Global Governance},
  author={Caparros, A. and Finus, M.},
  journal={Environmental and Resource Economics},
  volume={76},
  number={4},
  pages={913-927},
  year={2020}
}

@article{coleman2017extending,
  title={Extending UN peacekeeping financing beyond UN peacekeeping operations? The prospects and challenges of reform},
  author={Coleman, Katharina P},
  journal={Global Governance},
  volume={23},
  pages={101},
  year={2017},
  publisher={HeinOnline}
}

@article{jenks2013united,
  title={United Nations Development at a crossroads},
  author={Jenks, Bruce and Jones, Bruce},
  journal={New York: New York University, Center on International Cooperation},
  year={2013}
}

@misc{baumann2024financing,
  title={Financing the United Nations: Status quo, challenges, and reform options},
  author={Baumann, MO and Haug, S},
  year={2024},
  publisher={Friedrich Ebert Stiftung (FES) and German Institute of Development and~…}
}

@article{haug2022international,
  title={International organizations and differentiated universality: reinvigorating assessed contributions in United Nations funding},
  author={Haug, Sebastian and Gulrajani, Nilima and Weinlich, Silke},
  journal={Global Perspectives},
  volume={3},
  number={1},
  pages={39780},
  year={2022},
  publisher={University of California Press}
}

@incollection{10.1093/oso/9780198877936.003.0004,
    author = {Graham, Erin R.},
    isbn = {9780198877936},
    title = "{85Voluntary Funding and Financial Crisis}",
    booktitle = "{Transforming International Institutions: How Money Quietly Sidelined Multilateralism at The United Nations}",
    publisher = {Oxford University Press},
    year = {2023},
    month = {07},
    doi = {10.1093/oso/9780198877936.003.0004},
    url = {https://doi.org/10.1093/oso/9780198877936.003.0004},
    eprint = {https://academic.oup.com/book/0/chapter/413255641/chapter-pdf/51059246/oso-9780198877936-chapter-4.pdf},
}

@article{graham2015money,
  title={Money and multilateralism: how funding rules constitute IO governance},
  author={Graham, Erin R},
  journal={International Theory},
  volume={7},
  number={1},
  pages={162--194},
  year={2015},
  publisher={Cambridge University Press}
}

@article{campos2018political,
  title={The political economy of UN System operational activities: overcoming the bilateralization of multilateralism through pooled funds?},
  author={Campos, Luciana R},
  journal={Mon{\c{c}}{\~o}es: Revista de Rela{\c{c}}{\~o}es Internacionais da UFGD},
  volume={7},
  number={13},
  pages={83--115},
  year={2018}
}

@article{browne2017vertical,
  title={Vertical funds: new forms of multilateralism},
  author={Browne, Stephen},
  journal={Global Policy},
  volume={8},
  pages={36--45},
  year={2017},
  publisher={Wiley Online Library}
}

@article{abdenur2019peacekeeping,
  title={UN peacekeeping in a multipolar world order: Norms, role expectations, and leadership},
  author={Abdenur, Adriana Erthal},
  journal={United nations peace operations in a changing global order},
  pages={45--65},
  year={2019},
  publisher={Springer International Publishing}
}

@incollection{reinsberg2015rise,
  title={The rise of multi-bi aid and the proliferation of trust funds},
  author={Reinsberg, Bernhard and Michaelowa, Katharina and Eichenauer, Vera Z},
  booktitle={Handbook on the economics of foreign aid},
  pages={527--554},
  year={2015},
  publisher={Edward Elgar Publishing}
}

@incollection{shephard_fare_1974,
  author    = {Shephard, R. W. and Färe, R.},
  title     = {The Law of Diminishing Returns},
  booktitle = {Production Theory: Proceedings of an International Seminar Held at the University at Karlsruhe May–July 1973},
  pages     = {287--318},
  publisher = {Springer Berlin Heidelberg},
  address   = {Berlin, Heidelberg},
  year      = {1974},
}

\section*{Appendix}

\subsection*{The Nash provision in our economy}
\label{sec:2}
Below we present a detailed proof of the uniqueness of the Nash equilibrium allocation.

\begin{theorem}
In our economy, for each \(i\in \mathscr{N}\) and \(k \in \mathscr{K}\) we have \(x_i^{k^*}>0\), where \(x_i^{k^*}>0\) is agent \(i\)'s consumption of the private good \(k\) at the unique Nash equilibrium allocation.
\end{theorem}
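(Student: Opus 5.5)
The approach is proof by contradiction, using the boundary-indifference assumption on preferences together with strict desirability in the interior. Suppose that at the Nash equilibrium $s^*$ some agent $i$ has $x_i^{k^*}=0$ for some $k\in\mathscr{K}$. Then the bundle $(x_i^*,Y^*)$ lies on the boundary of $\mathbb{R}_+^{K+L}$. By the standing assumption, agents are indifferent among all boundary bundles and strictly prefer every bundle in $\mathbb{R}_{++}^{K+L}$ to them. So it suffices to exhibit a feasible unilateral deviation $s_i\in\mathscr{S}_i$ for agent $i$ whose induced bundle $(x_i,Y_i^{*}-y_i^{*}+y_i)$ is strictly interior. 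This would give $U_i(s_i,s_{-i}^*)>U_i(s^*)$, contradicting condition \ref{Nash3} of Definition \ref{Nashdef}.

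The construction proceeds in three steps.

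First, note that $\omega_i\in\mathbb{R}_{++}^K$, so $\sum_k\omega_i^k>0$.

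Second, choose a small $\varepsilon>0$ and define the deviation by
\[
x_i^k=\min\{\omega_i^k,\ \varepsilon\}\ \text{for all } k, \qquad
y_i^l=\frac{\sum_k\omega_i^k-\sum_k x_i^k}{L}\ \text{for all } l.
\]
With $\varepsilon$ small enough that $\sum_k x_i^k<\sum_k\omega_i^k$, every $y_i^l$ is strictly positive. Both constraints defining $\mathscr{S}_i$ hold by construction.

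Third, verify that the resulting bundle is interior. Each private component satisfies $x_i^k>0$. Each public component satisfies $Y_l=\sum_{j\neq i}y_j^{l*}+y_i^l\ge y_i^l>0$. Hence the bundle lies in $\mathbb{R}_{++}^{K+L}$, which yields the contradiction described above.

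The main obstacle is interpretive. The phrase ``indifferent between all bundles on the boundary'' must be read, as in Shitovitz and Spiegel, to mean that every boundary bundle is strictly worse than any interior bundle. Under that reading the argument above is complete.

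If one instead wants to rely only on the smoothness and normality assumptions \eqref{eq:mrs}, I would use a marginal argument. At $x_i^{k^*}=0$, agent $i$ must have $y_i^{l*}>0$ for some $l$, since otherwise its whole endowment would go unspent and that would contradict desirability. Shifting a small amount $\delta$ from $y_i^l$ to $x_i^k$ changes utility at first order by $\delta\bigl(\partial_{x_i^k}u_i-\partial_{Y_l}u_i\bigr)$. This is positive provided the marginal utility of $x_i^k$ blows up at $0$, i.e.\ an Inada-type condition. That condition is implied by the boundary-indifference assumption, since any boundary level set must be reached with infinite slope. I would present the first argument as the proof and remark on the second.
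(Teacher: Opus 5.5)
Your main argument is correct and is essentially the paper's own proof: the paper also rules out $x_i^{k^*}=0$ by exhibiting a unilateral deviation in $\mathscr{S}_i$ to a strictly interior bundle, and then uses the strict inferiority of boundary bundles against condition~\ref{Nash3} of Definition~\ref{Nashdef}; its deviation takes $x_i^k=\omega_i^k/2$ and splits the remaining $\sum_k\omega_i^k/2$ equally over the $L$ public goods, which avoids your $\varepsilon$. Two small corrections: your ``interpretive obstacle'' is not really one, because continuity together with positive marginal utility on $\mathbb{R}_{++}^{K+L}$ already forces the common boundary utility to lie strictly below every interior utility (shrink one coordinate of an interior point to zero and pass to the limit), and your closing aside is wrong in asserting that boundary indifference implies $\partial u_i/\partial x_i^k\to\infty$ (e.g.\ $u=xY$ has $\partial u/\partial x=Y$ at $x=0$; what vanishes there is $\partial u/\partial Y$), so you should drop that remark.
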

\begin{proof}
Since each \(i\in \mathscr{N}\) is endowed with a strictly positive amount \(\omega_i=(\omega_i^1,...,\omega_i^K)\in R_{++}^K\) of the private goods and for each agent, \(i\in \mathscr{N}\), the preference relation is defined over his consumption bundles in \(R_+^{K+L}\), where all commodities are strictly desired over \(R_{++}^{K+L}\) and agents are indifferent between all bundles on the boundary of \(R_+^{K+L}\), then by the individual rationality condition we get that each \(i\in \mathscr{N}\) will prefer the allocation in which he contributes half of any quantity of private good in his possession (and distributes the total contribution equally among the public goods) and all the other agents make a non-negative contribution from each private good, rater than an allocation in which there exists \(k\in \mathscr{K}\) for whom \(x_i^k=0\). Since for each \(i\in \mathscr{N}\) we have \(U_i(s^*) \ge U_i(s_i,s_{-i}^*)\), especially when \(s_i=\Big(\frac{\omega_i^1}{2},...,\frac{\omega_i^K}{2}; \frac{\sum_{k\in \mathscr{K}}(\frac{\omega_i^k}{2})}{L},...,\frac{\sum_{k\in \mathscr{K}}(\frac{\omega_i^k}{2})}{L}\Big)\in R_{++}^{K+L}\), it follows that for each \(i\in \mathscr{N}\) and \(k \in \mathscr{K}\) we have \(x_i^{k^*}>0\).
\end{proof}

\begin{theorem}
Our finite economy admits a unique Nash allocation. We denote it by \(((x_i^*)_{i\in \mathscr{N}},Y^*)\). 
\end{theorem}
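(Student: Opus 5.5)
The plan is to adapt the classical Bergstrom--Blume--Varian / Shitovitz--Spiegel uniqueness argument for voluntary provision. I will use the normality assumption in (\ref{eq:mrs}) together with interiority of private consumption from the preceding theorem.

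\textbf{Step 1: reduce each agent's problem.} Since the production technology is linear one-to-one, agent \(i\), taking \(Y_{-i}=\sum_{j\neq i} y_j\) as given, chooses \((x_i,Y)\) to maximize \(u_i(x_i,Y)\). The constraints are the budget line
\[
\sum_{k\in\mathscr{K}} x_i^k+\sum_{l\in\mathscr{L}} Y_l=\sum_{k\in\mathscr{K}}\omega_i^k+\sum_{l\in\mathscr{L}} Y_{-i}^l ,
\]
together with \(Y\ge Y_{-i}\) componentwise. Thus \(i\)'s Nash choice is a constrained demand problem with ``full income'' \(W_i=\sum_k\omega_i^k+\sum_l Y_{-i}^l\). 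Strict quasi-concavity gives a unique optimal bundle \(\bigl(x_i(W_i,Y_{-i}),Y(W_i,Y_{-i})\bigr)\). By the preceding theorem, \(x_i^{k*}>0\), so at equilibrium the first-order conditions hold as \(MRS^i_{Y_l-x_i^k}\le 1\), with equality whenever \(y_i^{l*}>0\). Strict ordinal normality (\ref{eq:mrs}) implies both private and public goods are strictly normal in \(W_i\) for the unconstrained problem.

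\textbf{Step 2: the comparison argument.} Suppose two Nash allocations \((x^*,Y^*)\) and \((\hat x,\hat Y)\) exist with \(Y^*\neq\hat Y\). Without loss of generality some coordinate satisfies \(\hat Y_l>Y^*_l\). Take the set \(C\) of agents with \(\hat y_i^l>y_i^{l*}\); it is nonempty. For \(i\in C\), \(i\) contributes positively to \(l\) at the hat allocation, so \(MRS^i_{Y_l-x_i^k}(\hat x_i,\hat Y)=1\ge MRS^i_{Y_l-x_i^k}(x_i^*,Y^*)\). Monotonicity of the MRS in (\ref{eq:mrs}), with \(\hat Y_l>Y_l^*\) pushing the MRS down, then forces \(\hat x_i^k\ge x_i^{k*}\) via normality. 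Summing the budget identities gives \(\sum_k\hat x_i^k+\sum_l\hat Y_l\le\sum_k\omega_i^k+\sum_l\hat Y_{-i}^l\), and combining this with \(\hat x_i\ge x_i^*\) yields \(\hat y_i\le y_i^*\) in aggregate over public goods, contradicting membership in \(C\). The aggregate statement \(\sum_l\hat y_i^l\le\sum_l y_i^{l*}\) is enough once the public-good index is handled as below. Hence \(\hat Y=Y^*\), and then each \(x_i\) is pinned down uniquely by strict quasi-concavity given \(Y^*\) and \(Y_{-i}^*\).

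\textbf{Main obstacle.} With \(L>1\) public goods, a higher \(\hat Y_l\) need not mean every coordinate moved the same way, and cross effects between \(Y_l\) and \(Y_{l'}\) are not controlled by (\ref{eq:mrs}) alone. I would first try to use ordinal separability to write \(u_i\) as a function of a private aggregate and a public aggregate, so that the problem collapses to the one-private/one-public case of \citet{shitovitz2001stable} and \citet{perets2012trading}. The argument above then applies verbatim to the aggregate public quantity, and separability recovers the composition. Only the allocation \(Y^*\) is claimed unique; individual contribution vectors \(y_i^*\) may not be, which matches the paper's convention of treating strategies leading to the same allocation as identical. If separability does not deliver the aggregation, the fallback is a lattice argument: order the candidate equilibria by the componentwise maximum \(\max(Y^*,\hat Y)\) and run Step 2 on the coordinate where the difference is largest.
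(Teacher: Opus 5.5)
\emph{Verdict: there is a genuine gap.} Your Step~2 is the same kind of MRS comparison that the paper's appendix runs case by case. As written, however, it only goes through for $K=L=1$, and neither repair in your last paragraph closes the general case.

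Two inferences fail. First, from $MRS^i_{Y_l-x_i^k}(\hat x_i,\hat Y)=1\ge MRS^i_{Y_l-x_i^k}(x_i^*,Y^*)$ and $\hat Y_l>Y_l^*$ you cannot deduce $\hat x_i^k\ge x_i^{k*}$. The condition (\ref{eq:mrs}) signs only the two own partials, while between the two equilibria the other coordinates $Y_{l'}$ and $x_i^{k'}$ move too. For the same reason, (\ref{eq:mrs}) does not by itself make every good normal once there are more than two goods. Second, even granting $\hat x_i\ge x_i^*$, you get only $\sum_l\hat y_i^l\le\sum_l y_i^{l*}$. This is compatible with $\hat y_i^l>y_i^{l*}$, because $i$ can shift contributions between public goods, so no single agent in $C$ yields a contradiction.

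Your repairs do not fix this. Block separability $u_i=F_i(g_i(x_i),h_i(Y))$ does not reduce the game to one public good. The aggregate $h_i$ differs across agents, so there is no common scalar that everybody ranks the same way, which is precisely what the Bergstrom--Blume--Varian step exploits. Also, the Nash constraint $Y\ge Y_{-i}$ is componentwise, so $i$'s choice of $Y$ is not a function of public expenditure alone. Under that reading the cross-effects $\partial MRS^i_{Y_l-x_i^k}/\partial Y_{l'}$ are unrestricted. An agent can then cut $Y_l$ by more than one unit per unit rise in $Y_{l'}$, reaction curves can cross more than once, and multiple Nash allocations become possible. The lattice fallback does not help either: contributions are strategic substitutes, order-theoretic arguments deliver extremal equilibria rather than uniqueness, and working on the coordinate with the largest difference meets the same cross-effects.

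Two smaller points. Your full-income reformulation drops the caps $x_i^k\le\omega_i^k$ in $\mathscr{S}_i$, so a positive contribution gives $MRS=1$ only for uncapped $k$. And when $\hat Y=Y^*$, the vectors $Y_{-i}$ may still differ across the two profiles, so strict quasi-concavity alone does not pin down $x_i$.

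Closing the gap needs two ingredients. The first is the strong reading of ordinal separability, $u_i=F_i\bigl(\sum_k f_{ik}(x_i^k)+\sum_l g_{il}(Y_l)\bigr)$, so that $MRS^i_{Y_l-x_i^k}=g_{il}'(Y_l)/f_{ik}'(x_i^k)$ depends on $(x_i^k,Y_l)$ alone. This is exactly how the paper's three cases use (\ref{eq:mrs}): they compare MRS values while tracking only the coordinate that changed. The second is to replace the agent-by-agent contradiction with a count over sets, as follows.

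Let $\lambda_i$ be the multiplier on $i$'s budget (for a non-contributor, any admissible value). Using $x_i^k>0$ from the preceding theorem, the Kuhn--Tucker conditions give $x_i^k=\min\{\omega_i^k,(f_{ik}')^{-1}(\lambda_i)\}$ and $g_{il}'(Y_l)\le\lambda_i$, with equality when $y_i^l>0$. Put $L^+=\{l:\hat Y_l>Y_l^*\}$ and $A=\{i:\hat\lambda_i<\lambda_i\}$. Then:
\begin{itemize}
\item Every contributor to a good of $L^+$ at the hat profile lies in $A$.
\item Every contributor to a good outside $L^+$ at the star profile has $\lambda_i\le\hat\lambda_i$, so members of $A$ fund only goods of $L^+$ at the star profile.
\item Members of $A$ consume weakly more of every private good at the hat profile, and strictly more in total if they contribute there.
\end{itemize}
Hence $\sum_{l\in L^+}\hat Y_l\le\sum_{i\in A}\sum_l\hat y_i^l<\sum_{i\in A}\sum_l y_i^{l*}\le\sum_{l\in L^+}Y_l^*$, contradicting $L^+\neq\emptyset$. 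By symmetry $\hat Y=Y^*$. The same bookkeeping then forces $\hat x_i=x_i^*$ for every $i$: agents whose multiplier changed contribute nothing at either profile, and the rest have the same $x_i$ by the formula.
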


\begin{proof}
Assume by negation that there are two Nash equilibria \(((x_i^*)_{i\in \mathscr{N}},Y^*)\) and \(((x_i^{**})_{i\in \mathscr{N}},Y^{**})\). Thus, there may be three cases:
\begin{enumerate}
\item
For each \(i\in \mathscr{N}\) and \(k \in \mathscr{K}\) we have \(x_i^{k^*}=x_i^{k^{**}}\) and there exists \(l\in \mathscr{L}\) for whom \(Y_l^* \neq Y_l^{**}\) is met. \\
Assume, w.l.o.g., that for this specific \(l\in \mathscr{L}\) we have \(Y_l^* >Y_l^{**}\). Thus, there exists \(i\in \mathscr{N}\) for whom \(y_i^{l^*}>0\) and for some \(\grave{k} \in \mathscr{K}\) we have \(0<x_i^{\grave{k}^*}<\omega_i^{\grave{k}}\). Hence, by the interior solution condition for Nash equilibrium, we get \(MRS_{Y_l - x_i^{\grave{k}}}^i(x_i^*,Y^*)=1\). Moreover, since for each \(i\in \mathscr{N}\) and \(k \in \mathscr{K}\) we have \(x_i^{k^*}=x_i^{k^{**}}\), we also get \(0<x_i^{\grave{k}^{**}}<\omega_i^{\grave{k}}\) and thus \(MRS_{Y_l - x_i^{\grave{k}}}^i(x_i^{**},Y^{**})=1\). However, since \(Y_l^* >Y_l^{**}\) and \(x_i^{k^*}=x_i^{k^{**}}\), it follows that, by the strict ordinal normality assumption, we must have \(MRS_{Y_l - x_i^{\grave{k}}}^i(x_i^*,Y^*)<MRS_{Y_l - x_i^{\grave{k}}}^i(x_i^{**},Y^{**})=1\). A contradiction.

\item
For each \(l\in \mathscr{L}\) we have \(Y_l^* =Y_l^{**}>0\) and there exists \(i\in \mathscr{N}\) for whom \(x_i^{k^*}\neq x_i^{k^{**}}\) is met. \\
By Condition \ref{Nash2} of Definition \ref{Nashdef} we get \(\sum_{l\in \mathscr{L}}Y_l^*=\sum_{i\in \mathscr{N}}\Big(\sum_{k\in \mathscr{K}}(\omega_i^k-x_i^{k^*})\Big)\). Thus, there exists \(k\in \mathscr{K}\) and \(i\in \mathscr{N}\) for whom we have \(x_i^{k^*}<x_i^{k^{**}}\le \omega_i^k \) so that for some \(l\in \mathscr{L}\) we get \(MRS_{Y_l - x_i^k}^i(x_i^*,Y^*)=1\). However, since for this specific \(l\in \mathscr{L}\) we have \(Y_l^* =Y_l^{**}\) and \(x_i^{k^*}<x_i^{k^{**}}\) it follows that, by the strict ordinal normality assumption, we must have \(MRS_{Y_l - x_i^{k}}^i(x_i^*,Y^*)<MRS_{Y_l - x_i^{k}}^i(x_i^{**},Y^{**})\le1\). A contradiction.

\item
There exists \(i\in \mathscr{N}\) and \(l\in \mathscr{L}\) for whom \(x_i^*\neq x_i^{**}\) and \(Y_l^* \neq Y_l^{**}\), respectively. \\
Assume, w.l.o.g., that for this specific \(l\in \mathscr{L}\) we have \(Y_l^* >Y_l^{**}\). Thus, there exists \(i\in \mathscr{N}\) for whom \(y_i^{l^*}>0\) and for some \(\grave{k} \in \mathscr{K}\) we have \(0<x_i^{\grave{k}^*}<\omega_i^{\grave{k}}\). Hence, by the interior solution condition for Nash equilibrium, we get \(MRS_{Y_l - x_i^{\grave{k}}}^i(x_i^*,Y^*)=1\). Now, if for each \(i\in \mathscr{N}\) and \(k \in \mathscr{K}\) we have \(x_i^{k^*}>x_i^{k^{**}}\), then by \(\sum_{l\in \mathscr{L}}Y_l^*=\sum_{i\in \mathscr{N}}\Big(\sum_{k\in \mathscr{K}}(\omega_i^k-x_i^{k^*})\Big)\) there must be \(\grave{l}\in \mathscr{L}\) for whom \(Y_{\grave{l}}^* <Y_{\grave{l}}^{**}\) where, by the strict ordinal normality assumption, for each \(i\in \mathscr{N}\) and \(k \in \mathscr{K}\) we get \(MRS_{Y_{\grave{l}} - x_i^{k}}^i(x_i^{**},Y^{**})<MRS_{Y_{\grave{l}} - x_i^{k}}^i(x_i^{*},Y^{*})\le1\). However, since \(Y_{\grave{l}}^{**}>0\) it follows that for some \(i\in \mathscr{N}\) and \(k \in \mathscr{K}\) we must have \(MRS_{Y_l - x_i^k}^i(x_i^*,Y^*)=1\). A contradiction. Thus, if we assume, w.l.o.g, that for this specific \(l\in \mathscr{L}\) we have \(Y_l^* >Y_l^{**}\) then there exist \(i\in \mathscr{N}\) and \(k \in \mathscr{K}\)  for whom \(x_i^{k^*}<x_i^{k^{**}}\le \omega_i^k \) where \(MRS_{Y_l - x_i^k}^i(x_i^*,Y^*)=1\). However, since we have \(Y_l^* >Y_l^{**}\) and \(x_i^{k^*}<x_i^{k^{**}}\le \omega_i^k \), it follows, by the strict ordinal normality assumption, that \(MRS_{Y_l - x_i^{k}}^i(x_i^*,Y^*)<MRS_{Y_l - x_i^{k}}^i(x_i^{**},Y^{**})\le1\). A contradiction.
\end{enumerate}
\end{proof}

\subsection*{Model computation example}
For a numerical illustration of this paper, consider the following example. Assume all agents share the utility function:
 \[u^i(x_i^1,x_i^2,Y_1, Y_2)=\sqrt{x_i^1}\sqrt{x_i^2}Y_1 Y_2\]  \[i=1,2\]
The initial endowment of each agent is:
\[\omega_1=(3,2) \, \, , \, \,  \omega_2=(2,3)\]
By the interior solution condition for Nash equilibrium, \(MRS_{Y_l - x_i^k}^i(x_i^*,Y^*)=1\), we get \(\forall i \in \mathscr{N}\):

\[x_i^1=x_i^2=0.5Y_1\]
\[x_i^1=x_i^2=0.5Y_2\]
\[Y_1=Y_2=y_1^1+y_2^1=y_1^2+y_2^2\]
Substituting these results in the budget constraint \(\sum_{k\in \mathscr{K}} x_i^k+\sum_{l\in \mathscr{L}}y_i^l \le \sum_{k\in \mathscr{K}} \omega_i^k\) will yield:

\[0.5Y_1+0.5Y_1+y_1^1+y_1^2=5\]
\[0.5Y_1+0.5Y_1+y_2^1+y_2^2=5\]
Vertical summation will yield \(4Y_1=10\), thus \(Y_1^*=2.5\) and the unique Nash equilibrium allocation is:
\[(x^*_1,x^*_2;Y^*_1,Y^*_2)=(1.25,1.25,1.25,1.25;2.5,2.5)\]
And the utilities are:
\[u^*_1=u^*_2=7.8125\]

Note that this unique allocation results from infinite possible combinations of Nash strategies. That is, the contribution of the first agent to the production of the first public good depends on and complements the contribution of the second agent to the production of the same public good. This is the case for the production of the second public good. Hence, there may be infinite combinations of the agents' contributions to the production of the public goods, but the result of all those combinations leads to the same amount of each of the public goods (symmetry of strategies). In the case of our example, the final result is \(Y_1^*=Y_2^*=2.5\).

To calculate the Lindahl equilibrium allocation we will use the general rule of \cite{samuelson1954pure}. That is, for each public good one has \( \sum_{i \in \mathscr{N}} MRS_{Y_l-x_i^k}^{i}=RPT_{Y_l-x_i^k}\), where \(RPT_{Y_l-x_i^k}=1\) is the rate of technological substitution in production between public and private goods. Thus:
\[\frac{x_1^1}{0.5Y_1}+\frac{x_2^1}{0.5Y_1}=1 \Rightarrow x_1^1+x_2^1=0.5Y_1\]

\[\frac{x_1^2}{0.5Y_1}+\frac{x_2^2}{0.5Y_1}=1 \Rightarrow x_1^2+x_2^2=0.5Y_1\]

\[\frac{x_1^1}{0.5Y_2}+\frac{x_2^1}{0.5Y_2}=1 \Rightarrow x_1^1+x_2^1=0.5Y_2\]

\[\frac{x_1^2}{0.5Y_2}+\frac{x_2^2}{0.5Y_2}=1 \Rightarrow x_1^2+x_2^2=0.5Y_2\]
Which yields \(Y_1=Y_2\). Placing these results in condition 1 of definition 2 ( i.e., \(\sum_{k\in \mathscr{K}} \sum_{i\in \mathscr{N}} \bar{\bar{x}}_i^k+ \sum_{l\in \mathscr{L}}\bar{\bar{Y}}_l=\sum_{k\in \mathscr{K}} \sum_{i\in \mathscr{N}} x_i^{k*}+\sum_{l\in \mathscr{L}} Y_l^* =\sum_{k\in \mathscr{K}} \sum_{i\in \mathscr{N}} \omega_i^k \) ) will yield \(Y_1 = Y_2 =1\frac{1}{3}\), and the Trading equilibrium allocation will be:
 \[(\bar{\bar{x}}_1,\bar{\bar{x}}_2;\bar{\bar{Y_1}},\bar{\bar{Y_2}};\bar{\bar{p}}_1,\bar{\bar{p}}_2)=(\frac{5}{6},\frac{5}{6},\frac{5}{6},\frac{5}{6};3\frac{1}{3},3\frac{1}{3};\frac{1}{2},\frac{1}{2},\frac{1}{2},\frac{1}{2})\]
And the utilities are:
\[\bar{\bar{u}}_1=\bar{\bar{u}}_2=9.26\]
Note that for all \( i \in \mathscr{N}\) we get:
 \[\bar{\bar{u}}_i>u^*_i \]
Thus, the Trading equilibrium allocation strictly
Pareto dominates (utility-wise) the Nash allocation. That is, all
agents would prefer to replace their Nash bundle with their
Trading equilibrium bundle.

\end{document}